
\documentclass[11pt]{article}
\pdfoutput=1

\usepackage{qcircuit}

\usepackage{amsmath} 
\usepackage{amsthm}  
\usepackage{amssymb} 
\usepackage{bold-extra}
\usepackage{multirow}
\usepackage{graphicx}
\usepackage{subcaption}
\usepackage{enumerate}
\usepackage{color}

\setlength{\textwidth}{6.5in}    
\setlength{\oddsidemargin}{0.0in}
\setlength{\evensidemargin}{0.0in}
\setlength{\textheight}{9in}
\setlength{\topmargin}{0in}
\setlength{\headheight}{0in}
\setlength{\headsep}{0in}

\theoremstyle{plain}
\newtheorem{theorem}{Theorem} 
\newtheorem{lemma}{Lemma}     
\newtheorem{definition}{Definition} 
\newtheorem{corollary}{Corollary} 


\theoremstyle{definition}



\newcommand{\ket}[1]{|#1\rangle}
\newcommand{\bra}[1]{\langle #1|}
\newcommand{\ketbra}[2]{|#1\rangle\langle#2|}


\newcommand{\hgatematrix}{
  \frac{1}{\sqrt{2}} \begin{bmatrix}
                       1  &  1 \\
                       1  & -1
                     \end{bmatrix}
}
\newcommand{\sgatematrix}{
  \begin{bmatrix}
    1 & 0 \\
    0 & i
  \end{bmatrix}
}
\newcommand{\cnotgatematrix}{
  \begin{bmatrix}
    1 & 0 & 0 & 0 \\
    0 & 1 & 0 & 0 \\
    0 & 0 & 0 & 1 \\
    0 & 0 & 1 & 0
  \end{bmatrix}
}

\newcommand{\cnot}{\mathit{CNOT}}
\newcommand{\swap}{\mathit{SWAP}}

\everymath{\displaystyle}


\begin{document}

\title{Two-Qubit Stabilizer Circuits with Recovery I: Existence}
\author{Wim van Dam\footnote{Department of Computer Science,
Department of Physics, University of California, Santa Barbara, CA, USA}
\and
Raymond Wong\footnote{Department of Computer Science,
University of California, Santa Barbara, CA, USA}}

\maketitle

\begin{abstract}
Understanding how a stabilizer circuit responds to different input is
important to formulating an effective strategy for resource management.
In this paper, we further investigate the many ways of using stabilizer
operations to generate a single qubit output from a two-qubit state.
In particular, by restricting the input to certain product states, we
discover probabilistic operations capable of transforming stabilizer
circuit outputs back into stabilizer circuit inputs. These secondary
operations are ideally suited for recovery purposes and require only
one extra resource ancilla to succeed. As a result of reusing qubits
in this manner, we present an alternative to the original state
preparation process that can lower the overall costs of executing
any two-qubit stabilizer procedure involving non-stabilizer resources.
\end{abstract}


\section{Introduction}
There has been significant progress to building quantum computers.
We can protect qubits with quantum codes, and we can combat the
spread of errors with fault-tolerance; high thresholds approaching
$1\%$ \cite{knill.noisy} is already within reach. Rather, one of
the central challenges is in the efficient handling of noise, where
it is necessary to strike a delicate balance between quality and cost.
Currently many physical qubits are required to achieve this desired
level of protection on a logical qubit \cite{fowler.surface}, but this
comprises only one part of a larger problem. The fact remains that most
fault-tolerant schemes are constrained to a finite number of native
operations, so there is a limit to the type of computations that we
can perform. This usually consists of stabilizer operations --
Clifford group unitaries, Pauli measurements, and ancilla $\ket{0}$
preparation -- which are efficiently simulable on classical computers
and capable of producing highly entangled states. Unfortunately,
stabilizer operations by themselves are not universal, placing a
premium on any non-stabilizer resource added to a circuit.

Magic state distillation is one solution addressing this inherent
limitation of stabilizer operations \cite{bravyi.kitaev.msd}. It
works as follows: prepare imperfect ``magic states,'' measure certain
stabilizer code syndrome operators, then postselect on some target
outcome. The process is repeated recursively until the qubits are at
a high enough quality to consume: the magic states are injected into
quantum circuits to implement quantum gates outside the Clifford group
of operations. The primary concern then is the large input-to-output
ratio before the magic state error is low enough to be permitted in a
fault-tolerant scheme. Numerous proposals over the last few years have
made tremendous advances to increase the efficiencies of this technique
\cite{bravyi.haah.msd, campbell.ogor.magic, dcp.complex, Haah2017,
meier.four.qubit.code}, although the overall format more or less remains
the same. Interestingly, stabilizer operations are enough to perform the
distillation, which is a testament to their versatility. Then given a
supply of non-Clifford gates, we may employ any number of pre-existing
synthesis algorithms to approximate unitaries over this basis. Previous
work has already succeeded in producing solutions able to generate
sequences for single qubit rotations in an optimal fashion
\cite{kmm.asym.opt, kmm.fast.eff, ross.opt.vbasis.zapprox,
ross.sel.opt.tbasis.zapprox}. A recent one even suggests a kind of
distill-and-synthesis hybrid to reduce resource usage even further:
a factor of $3$ savings with quadratic error suppression is possible
over traditional distill-then-synthesize methods
\cite{campbell.howard.unify.1, campbell.howard.unify.2}.

The creativity that went into designing these distillation protocols
is one reason motivating our study of stabilizer operations. Other
uses include procedures for distilling multiple types of magic qubits
\cite{campbell.ogor.magic, dcp.complex, Haah2017, landahl.ces.2013},
as well as implementing phase rotations with low depth circuits.
Some notable examples of the latter are contained in \cite{dcs} and
\cite{jones.qchem}, both of which feature the same stabilizer circuit
to perform the operation. The differences lie in the pre-computed
ancillas injected into the circuit, where Duclos-Cianci and Svore
\cite{dcs} additionally demonstrated how to use the same circuit to
create other resource qubits. At any rate, though simple, both displayed
the advantages of having a large set of non-homogenous states at our
disposal, and all that is required is a two-qubit stabilizer circuit.

Here we consider general two-to-one stabilizer procedures that take
a two-qubit state and produces a single qubit output using stabilizer
operations only. Our intent is to explore these processes from a
different angle, outside the realm of state distillation, and simply
examine their behavior on more arbitrary input. Some limits on distilling
two-qubit states are already discussed in \cite{reichardt.msd.2006}.
Instead, we refine the implementation details first provided by Reichardt
\cite{reichardt.msd.2006} to identify three circuit configurations
characterizing all such procedures. These three forms suggest that in
addition to Pauli measurements and postselection, single qubit Clifford
gates and at most one $\cnot$ or $\swap$ are enough to realize any
stabilizer procedure acting on two qubits. When the input set is further
confined to certain product states, we discover an interesting connection
between stabilizer circuits of the single $\cnot$ variety: there are
``recovery circuits'' that can recuperate a product state input from
a corrupted stabilizer circuit output. For our main result, we show that
\emph{any two-to-one stabilizer procedure realizable by one $\cnot$ has
a recovery circuit, and that every recovery circuit to this procedure has
the same probability of recovery}, which means any will suffice to install
as part of a larger computation. We end with a few experiments showcasing
the benefits of recovery circuits.

\section{Preliminaries}
This section provides an overview of the elementary stabilizer
operations and basic concepts.  The single qubit Pauli matrices
are
\begin{align}
I = \begin{bmatrix} 1 &  0 \\ 0 &  1 \end{bmatrix}, &&
X = \begin{bmatrix} 0 &  1 \\ 1 &  0 \end{bmatrix}, &&
Y = \begin{bmatrix} 0 & -i \\ i &  0 \end{bmatrix}, &&
Z = \begin{bmatrix} 1 &  0 \\ 0 & -1 \end{bmatrix}.
\end{align}
They satisfy not only the identities
\begin{align}
\begin{aligned}
 X^2 = Y^2 = Z^2 &= I, & XY &= iZ, &  YZ &=iX, & ZX &= iY
\end{aligned}
\end{align}
but they also form a basis for the space of $2\times2$ Hermitian
matrices.  We can expand any single qubit density matrix
$\varphi$ in terms of Pauli matrices using the expression
\begin{align}
\varphi = \frac{1}{2} \left( I + xX + yY + zZ \right).
\end{align}
If we collect the coefficients above, then
$(x, y, z) \in \mathbb{R}^3$ is the
\emph{Bloch vector} of $\varphi$.

A $n$-qubit stabilizer circuit is limited to certain quantum
gates and measurements.  It may use elements from the Clifford
group $\mathcal{C}(n)$, and it may apply measurements in the
$Z$-basis.  The Clifford group is generated by the
Controlled-NOT ($\cnot$), Hadamard ($H$), and
Phase ($P$) operators:
\begin{align}
\cnot & = \cnotgatematrix, &
H & = \hgatematrix,  &
P & = \sgatematrix.
\end{align}
A stabilizer circuit thus contains entirely of $\cnot$, $H$,
and $P$ gates.  For the values of $n$ we are concerned with,
$\mathcal{C}(1)$ and $\mathcal{C}(2)$ have sizes $24$ and
$11520$, respectively, modulo global phases. The circuit
diagram for a $Z$-measurement is given by the left image below:
\begin{align*}
  & \Qcircuit @C=1em @R=0.2em {
    & \measureD{Z} &
  }
  & \Qcircuit @C=1em @R=1.2em {
    & \qswap      & \qw \\
    & \qswap \qwx & \qw
  } & \end{align*}
while the right image represents a qubit $\swap$.
A Clifford circuit is a stabilizer circuit that excludes
measurements and implements a Clifford group unitary only.

\section{Postselected Two-to-One Stabilizer Circuits}
We revisit the study of stabilizer reductions from \cite{reichardt.msd.2006}
to derive Lemma \ref{lemma:cir-outputs-all}. Part of the novelty that
Lemma~\ref{lemma:cir-outputs-all} brings is the realization of recovery
circuits described in the next section. We first introduce some terminology
and notation to more concisely capture Reichardt's observations in
\cite{reichardt.msd.2006} to present our result.

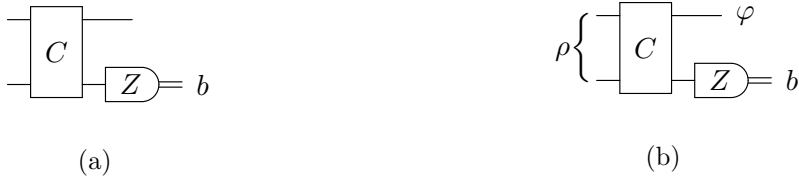
\begin{figure}[t]
  \centering
  \begin{subfigure}[h]{0.45\textwidth}
    \begin{align*}
    \Qcircuit @C=0.8em @R=1.2em {
      \lstick{} & \multigate{1}{C} &
         \qw \\
      \lstick{} & \ghost{C} &
         \measureD{Z} & \rstick{b} \cw
    }
    \end{align*}
    \caption{}
    \label{fig:2to1-cir-ex}
  \end{subfigure}
  \begin{subfigure}[h]{0.45\textwidth}
    \begin{align*}
    \rho \bigg\{\,\begin{matrix}
    \Qcircuit @C=0.8em @R=1.2em {
      \lstick{} & \multigate{1}{C} &
         \rstick{\varphi} \qw \\
      \lstick{} & \ghost{C} &
         \measureD{Z} & \rstick{b} \cw
    } \end{matrix}
    \end{align*}
    \caption{}
    \label{fig:2to1-cir-output}
  \end{subfigure}
  \caption{
    (a) A postselected two-to-one stabilizer circuit $(C, b)$
    consists of a stabilizer circuit component $C$ and a postselected
    bit value $b$.
    (b) The qubit $\varphi = \Phi_b(C,\rho)$ is the output of
    a postselected two-to-one stabilizer circuit $(C, b)$ on the
    two-qubit input $\rho$.
  }
  \label{fig:2to1ex}
\end{figure}

A $n$-to-$1$ stabilizer reduction is a procedure that accepts a $n$-qubit
state and generates a single qubit output using stabilizer operations only.
This means all post-measurement activities are also restricted to classical
control over stabilizer operations. Reichardt showed that any reduction can
be standardized to a particular form: an application of a Clifford unitary
on $n$ qubits, followed by a projection of qubits $2$ to $n$ onto a
computational basis state \cite{reichardt.msd.2006}. Since our focus is on
$n = 2$, we have the following definition.

\begin{definition}[postselected two-to-one stabilizer circuit]
A \emph{postselected two-to-one stabilizer circuit}
$(C, b)$ is a two-qubit quantum circuit that implements
a Clifford unitary $C$, followed by a $Z$-measurement
on the second qubit with an outcome $b \in \{0, 1\}$.
\end{definition}

\begin{definition}[probability and output]
Let $(C, b)$ be a postselected two-to-one stabilizer circuit and
let $\rho$ be a two-qubit state.  Then the \emph{probability} $Q_b$
of outcome $b$ on the transformed state $C\rho C^{\dagger}$ is
\begin{align}
  Q_b(C, \rho) = \mathrm{Tr}((I \otimes \bra{b}) C \rho
                               C^\dagger (I\otimes\ket{b})).
\end{align}
If $Q_b(C, \rho) > 0$,
then the \emph{output} $\Phi_b$ of a postselected circuit
$(C, b)$ on an input $\rho$ is
\begin{align}
\Phi_b(C, \rho) =
  \frac{(I \otimes \bra{b}) C \rho C^\dagger (I\otimes\ket{b})}
       {Q_b(C, \rho)}.
\end{align}
\end{definition}

At times, we may say \emph{circuit $C$} to reference the stabilizer
circuit piece only of the postselected circuit, which includes the
measurement at the end.  This allows us to use language about how
we may use or run the circuit $C$ and is often followed by details
on what course of action to take conditional on $b$ (or $1-b$).
The next definition describes what it means for postselected circuits
to produce similar outputs.

\begin{definition}[equivalent postselected two-to-one
stabilizer circuits] \label{def:equiv-cir}
Two postselected two-to-one stabilizer circuits $(C_1,b_1)$ and
$(C_2,b_2)$ are \emph{Clifford equivalent}, $(C_1, b_1) \sim (C_2, b_2)$,
if and only if there is a single qubit Clifford gate $G$ such that
for all two-qubit states $\rho$, we have the equality
\begin{align}
\label{eq:cliff-eq-cir}
(I \otimes \bra{b_1}) C_1 \rho C_1^\dagger (I\otimes\ket{b_1}) & =
G (I\otimes \bra{b_2}) C_2 \rho C_2^\dagger (I\otimes\ket{b_2}) G^\dagger.
\end{align}
Note that a Clifford equivalence implies that the probabilities
of observing a $b_1$ or $b_2$ are the same for the two circuits
i.e. $Q_{b_1}(C_1, \rho) = Q_{b_2}(C_2, \rho)$.
We say two postselected circuits are simply \emph{equivalent},
$(C_1, b_1)\equiv (C_2, b_2)$, if and only if $G = I$ in
Equation~\ref{eq:cliff-eq-cir}.
\end{definition}

We may also alter the circuits using $\ket{b_2} = X\ket{1-b_2}$
in Equation~\ref{eq:cliff-eq-cir} so that both postselect on the same
value.

As we mentioned earlier, any two-to-one stabilizer reduction can be
achieved through a postselected two-to-one stabilizer circuit. Despite
$|\mathcal{C}(2)| = 11520$, the number of actual reductions we need to
consider is $30$: one for each nontrivial two-qubit Pauli, plus the bit
\cite{reichardt.msd.2006}. As such, we can introduce three forms in the
following lemma to represent all postselected circuits $(C, b)$.
The proof is provided in Appendix~\ref{app:proof-eq-cir}.

\begin{figure}[t]
  \begin{subfigure}[h]{0.31\textwidth}
    \begin{align*}
        \Qcircuit @C=1.0em @R=1.2em {
          \lstick{} & \gate{G_1} & \ctrl{1} &
              \gate{G_3} & \qw \\
          \lstick{} & \gate{G_2} & \targ &
              \measureD{Z} & \rstick{0} \cw
        }
    \end{align*}
    \caption{}
    \label{fig:2to1-cir-cls-03}
  \end{subfigure}
  \begin{subfigure}[h]{0.31\textwidth}
    \begin{align*}
        \Qcircuit @C=1.0em @R=1.2em {
          \lstick{} & \gate{G_1} &
              \qw \\
          \lstick{} & \gate{G_2} &
             \measureD{Z} & \rstick{0} \cw
        }
    \end{align*}
    \caption{}
    \label{fig:2to1-cir-cls-01}
  \end{subfigure}
  \begin{subfigure}[h]{0.31\textwidth}
    \begin{align*}
        \Qcircuit @C=1.0em @R=1.2em {
          \lstick{} & \qswap & \gate{G_1} &
              \qw \\
          \lstick{} & \qswap \qwx & \gate{G_2} &
             \measureD{Z} & \rstick{0} \cw
        }
    \end{align*}
    \caption{}
    \label{fig:2to1-cir-cls-02}
  \end{subfigure}
  \caption{
    Any stabilizer procedure generating one qubit from two can be
    described by a postselected circuit $(C, b)$ resembling circuit
    (a), circuit (b), or circuit (c).
    The choice of single qubit Clifford gates $G_1$, $G_2$,
    and $G_3$ depend on $C$ and the measurement $b$.
    Circuit (a) is known as an interacting postselected
    circuit; the precise definition is provided in
    Section~\ref{sec:rec-cir}.
  }
  \label{fig:2to1-cir-cls}
\end{figure}
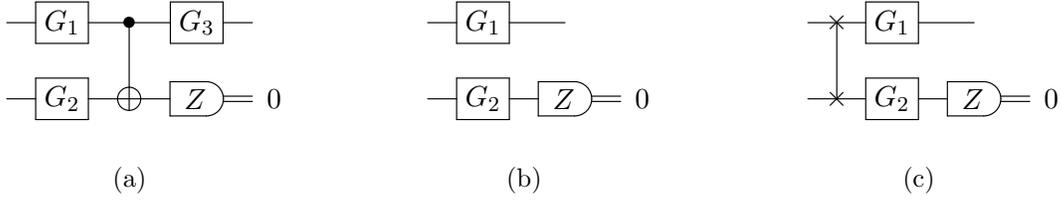

\begin{lemma}
\label{lemma:cir-outputs-all}
For every postselected two-to-one stabilizer circuit $(C, b)$,
there exists single qubit Clifford gates
$G_1$ and $G_2$ such that
either $(C, b) \sim (I \otimes G_1, 0)$,
or $(C, b) \sim ((I \otimes G_1) \swap, 0)$,
or $(C, b) \sim (\cnot(G_1 \otimes G_2), 0)$.
\end{lemma}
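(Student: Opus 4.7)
The plan is to classify each postselected circuit $(C,b)$ up to Clifford equivalence by the signed two-qubit Pauli it effectively measures on the input, and then exhibit a representative of each equivalence class in one of the three canonical forms.

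First, I would formalize the observation that measuring $I \otimes Z$ on $C\rho C^\dagger$ with outcome $b$ pulls back via $C$ to a measurement of $P = C^\dagger (I \otimes Z) C$ on $\rho$ with outcome $(-1)^b$; equivalently, the projector applied to $\rho$ is $\tfrac{1}{2}(I + (-1)^b P)$. Any further Clifford action on the unmeasured qubit is absorbed into the Clifford equivalence relation of Definition~\ref{def:equiv-cir}. This is Reichardt's characterization \cite{reichardt.msd.2006}, quoted in the preamble: two-to-one stabilizer reductions correspond bijectively to the $30$ pairs $(P, b)$ with $P$ a nontrivial two-qubit Pauli. Since the sign of $P$ can be absorbed into the choice of $b$, I may restrict attention to circuits with $b = 0$ and an arbitrary signed Pauli $P$.

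Next, I would split into three cases by the tensor structure of the signed Pauli. For the canonical circuit $(I \otimes G_1, 0)$, direct computation gives effective Pauli $I \otimes G_1^\dagger Z G_1$. For $((I \otimes G_1)\swap, 0)$ the effective Pauli is $G_1^\dagger Z G_1 \otimes I$, using that $\swap$ exchanges tensor factors under conjugation. For $(\cnot(G_1 \otimes G_2), 0)$, using $\cnot (I \otimes Z) \cnot = Z \otimes Z$, the effective Pauli is $G_1^\dagger Z G_1 \otimes G_2^\dagger Z G_2$. Because the map $G \mapsto G^\dagger Z G$ sends $\mathcal{C}(1)$ onto the six signed nontrivial single-qubit Paulis, the three forms jointly realize all $6 + 6 + 18 = 30$ signed two-qubit Paulis. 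Thus for any $(C,b)$, I match its effective signed Pauli to the unique form that can produce it and solve for $G_1$ (and, in form (a), $G_2$).

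The main obstacle is justifying that the signed Pauli is a complete invariant under Clifford equivalence, i.e., that equal signed Paulis force the existence of the single qubit Clifford $G$ in Equation~\ref{eq:cliff-eq-cir}. The forward implication is immediate: tracing Equation~\ref{eq:cliff-eq-cir} yields $Q_{b_1}(C_1, \rho) = Q_{b_2}(C_2, \rho) = \tfrac{1}{2}(1 + (-1)^{b_1} \mathrm{Tr}(\rho P_1))$ for all $\rho$, so the signed Paulis must agree. One route to the converse is to analyze the Kraus operator $A = (I \otimes \bra{b}) C$ and verify that two such operators induce the same quadratic map $\rho \mapsto A \rho A^\dagger$ iff they differ by a phase and a left action of $\mathcal{C}(1)$, using that $A$ is a partial isometry whose kernel is the $-1$ eigenspace of $(-1)^b P$. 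A cleaner alternative is to combine the forward implication with Reichardt's count of exactly $30$ equivalence classes and the explicit construction above of $30$ distinct representatives, pigeonholing to conclude that every class is represented by one of the three forms.
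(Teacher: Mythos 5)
Your proposal follows essentially the same route as the paper: classify $(C,b)$ by the pulled-back signed Pauli $P = C^\dagger(I\otimes Z)C$ (equivalently the projector $\Pi = C^\dagger(I\otimes\ketbra{b}{b})C$), check that the three canonical forms realize all $6+6+18=30$ signed two-qubit Paulis, and invoke the fact that equal projectors force Clifford equivalence. That last fact is exactly the paper's Lemma~\ref{lemma:clifford-maps}, which it proves by an explicit Pauli-basis expansion of $\Pi\rho\Pi$ and a permutation of the conjugated Paulis by a single qubit Clifford; your two suggested justifications of it (the partial-isometry/Kraus argument, or Reichardt's count of $30$ classes plus pigeonhole) are workable substitutes but are left as sketches rather than carried out.
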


\begin{corollary}
\label{cor:cir-negate-eq}
Let $(C_{eq}, b_{eq}) \sim (C, b)$.
Then $(C,1-b)$ is Clifford equivalent to a slightly modified
version $(C'_{eq},b'_{eq})$ of $(C_{eq},b_{eq})$:
\begin{align}
\begin{array}{c|c}
(C_{eq},b_{eq})\sim (C,b) & (C'_{eq},b'_{eq})\sim (C,1-b)\\\hline
(I \otimes G_1, 0) & (I \otimes XG_1, 0)\\
((I \otimes G_1) \swap, 0) & ((I \otimes XG_1) \swap, 0)\\
(\cnot(G_1 \otimes G_2), 0) & ((I \otimes X)\cnot(G_1 \otimes G_2), 0)\\
\end{array} &
\end{align}
\end{corollary}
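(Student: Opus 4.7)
My plan is to split the claim into two facts and combine them by transitivity. First, for any Clifford $C_{eq}$ we have the direct equivalence $(C_{eq}, 1) \equiv ((I \otimes X) C_{eq}, 0)$. Second, from the hypothesis $(C, b) \sim (C_{eq}, 0)$ we derive $(C, 1-b) \sim (C_{eq}, 1)$, possibly via a different Clifford gate. Combining these yields $(C, 1-b) \sim ((I \otimes X) C_{eq}, 0)$; and a direct multiplication verifies that in each of the three canonical cases, $(I \otimes X) C_{eq}$ is exactly the $C'_{eq}$ of the corollary's table, since $(I \otimes X)(I \otimes G_1) = I \otimes XG_1$ handles cases (a) and (b), while case (c) already has the form $(I \otimes X)\cnot(G_1 \otimes G_2)$.

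The first fact is immediate from $\ket{1} = X\ket{0}$ and $\bra{1} = \bra{0}X$: inserting $I = XX$ gives
\[
(I \otimes \bra{1}) C_{eq} \rho C_{eq}^\dagger (I \otimes \ket{1}) = (I \otimes \bra{0})(I \otimes X) C_{eq} \rho C_{eq}^\dagger (I \otimes X)(I \otimes \ket{0}).
\]

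The second fact is the main technical step. From $(C, b) \sim (C_{eq}, 0)$, the identity of rank-one completely positive maps forces the underlying Kraus operators to agree up to a global phase: $(I \otimes \bra{b}) C = \lambda (G \otimes I)(I \otimes \bra{0}) C_{eq}$ for some $|\lambda| = 1$. Setting $V := C C_{eq}^\dagger$, this becomes $(I \otimes \bra{b}) V = \lambda (G \otimes \bra{0})$. Decomposing $V = \sum_{\alpha,\beta \in \{0,1\}} V_{\alpha\beta} \otimes |\alpha\rangle\langle\beta|$ with single-qubit blocks $V_{\alpha\beta}$, the constraint pins down $V_{b,0} = \lambda G$ and $V_{b,1} = 0$, and imposing unitarity $VV^\dagger = I$ then forces $V_{1-b,0} = 0$ with $V_{1-b,1}$ unitary. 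Hence
\[
V = (\lambda G \otimes X^b)\bigl[I \otimes |0\rangle\langle 0| + P \otimes |1\rangle\langle 1|\bigr]
\]
for some single-qubit unitary $P$. Since $V$ is Clifford and $\lambda G \otimes X^b$ is Clifford, the bracketed controlled-$P$ factor must be Clifford, which forces $P$ to be a single-qubit Pauli by the standard characterization of controlled-Cliffords. The calculation $\bra{1-b} X^b = \bra{1}$ then yields $(I \otimes \bra{1-b}) C = \lambda (GP \otimes I)(I \otimes \bra{1}) C_{eq}$, showing $(C, 1-b) \sim (C_{eq}, 1)$ via the Clifford $GP$.

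The main obstacle is the block-wise analysis of $V$ together with invoking the characterization that a single-qubit-controlled unitary is Clifford precisely when the target gate is a Pauli; once these structural facts are in place, the rest is routine index chasing. A case-by-case verification working through each of the three canonical forms in Lemma~\ref{lemma:cir-outputs-all} separately would sidestep the Kraus argument, but it would obscure the uniform role of the Pauli $P$ in the new equivalence gate $GP$.
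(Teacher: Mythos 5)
Your proof is correct, but it takes a genuinely different route from the paper's (implicit) one. The paper gets this corollary essentially for free from Lemma~\ref{lemma:clifford-maps} in Appendix~\ref{app:proof-eq-cir}: Clifford equivalence is governed by the projector $\Pi = C^{\dagger}(I\otimes\ketbra{b}{b})C$ --- taking the trace of Equation~\ref{eq:cliff-eq-cir} shows equivalent circuits have equal projectors, and Lemma~\ref{lemma:clifford-maps} gives the converse --- and flipping the postselected bit replaces each projector by its complement $I-\Pi$, so equal projectors stay equal and $(C,1-b)\sim(C_{eq},1)$ follows at once; the relabeling $(C_{eq},1)\equiv((I\otimes X)C_{eq},0)$ and the three case simplifications are then exactly as in your first step (this is how the final paragraph of the proof of Lemma~\ref{lemma:cir-outputs-all} handles its $b+b'\bmod 2=1$ cases). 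You replace this projector bookkeeping with an explicit operator-level analysis: uniqueness up to phase of a single Kraus operator gives $(I\otimes\bra{b})C=\lambda(G\otimes\bra{0})C_{eq}$, and the block decomposition of $V=CC_{eq}^{\dagger}$ together with the fact that a controlled-$U$ is Clifford only when $U$ is a Pauli pins down $V$ as a Clifford times a controlled-Pauli. This is more work and invokes two standard facts the paper never needs, but it buys something the projector argument does not: the explicit connecting Clifford $GP$ for the flipped circuit, i.e., a constructive description of how the outputs on outcomes $b$ and $1-b$ are related. Both arguments are valid; the only thing I would ask you to tighten is the appeal to ``the standard characterization of controlled-Cliffords,'' which deserves a one-line justification (conjugate $X\otimes I$ and $Z\otimes I$ by $I\otimes\ketbra{0}{0}+P\otimes\ketbra{1}{1}$ and demand that the results be Pauli operators, which forces $PXP^{\dagger}=\pm X$ and $PZP^{\dagger}=\pm Z$, hence $P$ is a Pauli up to phase).
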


Due to Lemma~\ref{lemma:cir-outputs-all}, we have a remarkably
much easier time studying postselected circuits.  We may substitute
$(C, b)$ with another that likely uses fewer gates but behaves
in exactly the same way.  Because there are many identities on
Pauli operators and Clifford gates, $G_1$ and $G_2$
are not unique e.g.\
$((\cnot(Z \otimes I), 0)\equiv ((Z \otimes I)\cnot, 0)
\sim (\cnot, 0)$.  Of the $30$ reductions available, it is easy to
see that there are $18$ varieties of $(\cnot(G_1 \otimes G_2), 0)$,
and $6$ each for $(I \otimes G_1, 0)$ and $((I \otimes G_1)\swap, 0)$.
If we want to separate the circuits by the stricter kind of equivalence,
the number of classes is multiplied by $24$ e.g.\ $18 \cdot 24 = 432$
for $((G_3 \otimes I)\cnot(G_1 \otimes G_2), 0)$,
since there are $|\mathcal{C}(1)| = 24$ choices of $G_3$.

\section{Recovery Circuits}
\label{sec:rec-cir}
A quantum circuit involving measurements likely has outcomes
that we prefer over others. If we are less than fortunate,
convention dictates that we discard the output and rerun the
circuit on new input instances until we succeed. This is not
much of an issue when the initial overhead is low, but can
become problematic otherwise. If the cost associated with
state preparation is a barrier to large computations, any
method that alleviates this burden is highly desirable. It
turns out when $\rho$ is a tensor product state,
i.e.\ $\rho = \varphi \otimes \ketbra{\psi}{\psi}$, we have
an alternative: there exist operations capable of reusing
an undesirable output to try and recovery $\varphi$.

This input requirement means the only circuit
configuration of Lemma~\ref{lemma:cir-outputs-all}
worth considering is $(\cnot(G_1 \otimes G_2), 0)$.
We can easily see that when $(C, b) \sim (I \otimes G_1, 0)$,
the output of $(C, b)$ on $\varphi_1 \otimes \varphi_2$
is essentially $\varphi_1$.  The output is always an
input, and the same is similarly true for all circuits
$(C, b) \sim ((I \otimes G_1)\swap, 0)$.

\begin{definition}[interacting postselected circuit]
A postselected two-to-one stabilizer circuit $(C, b)$ is
\emph{interacting} if and only if there
are single qubit Clifford gates $G_1$ and $G_2$ such that
$(C, b) \sim (\cnot(G_1 \otimes G_2), 0)$.
We say circuit $C$ is \emph{interacting} if and only if
$(C, 0)$ is interacting.
\end{definition}

With that, we define the notion of a recovery circuit. As a
matter of convenience, we use $\psi$ in place $\ketbra{\psi}{\psi}$
throughout the remainder of our discussion on recovery circuits.

\begin{definition}[recovery circuit]
Let $(C, b)$ be an interacting postselected circuit.
Then a postselected two-to-one stabilizer circuit $(C', b')$
is a \emph{recovery circuit} of $(C, b)$ if and only if
for all two-qubit states $\varphi \otimes \psi$,
we have
\begin{align}
\varphi = \Phi_{b'}\left(
              C', \Phi_{1-b}(C, \varphi \otimes \psi)
              \otimes \psi \right).
\end{align}
\end{definition}

Notice that an input qubit to $(C', b')$ is
$\Phi_{1-b}(C, \varphi \otimes \psi)$, the output from
the circuit that postselects on the opposite outcome.
In this context, if $b$ is more desirable than $1 - b$, then
we say circuit $C$ is \emph{successful} upon measuring $b$ on
$C\left( \varphi \otimes \psi \right) C^{\dagger}$.
Otherwise circuit $C$ is \emph{unsuccessful}, and the recovery
circuit provides a second chance at obtaining the output of
$(C, b)$ on $\varphi \otimes \psi$ by using a presumably much
simpler circuit $C'$ to recover $\varphi$. Our next lemma presents
one way on how to design such a recovery circuit to $(C, b)$.

\begin{lemma}
\label{lemma:rec-cir}
Every interacting postselected circuit $(C, b)$
has a recovery circuit.
\end{lemma}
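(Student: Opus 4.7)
By Lemma~\ref{lemma:cir-outputs-all}, $(C,b)$ is Clifford equivalent to some canonical $(\cnot(G_1\otimes G_2),0)$, and by Corollary~\ref{cor:cir-negate-eq} its failure variant $(C,1-b)$ is Clifford equivalent to $((I\otimes X)\cnot(G_1\otimes G_2),0)$. A Clifford equivalence merely conjugates the success (or failure) output by a fixed single-qubit Clifford, so once I have a recovery circuit for the canonical case, prepending that same single-qubit Clifford on the input qubit of the recovery circuit handles the passage back to the original $(C,b)$. I will therefore concentrate on building a recovery circuit for $(\cnot(G_1\otimes G_2),0)$.

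Set $\varphi' = G_1\varphi G_1^\dagger$ and $\psi' = G_2\psi G_2^\dagger$, so the failure output of the canonical circuit reduces to $\tilde\varphi = \Phi_1(\cnot,\varphi'\otimes\psi')$. My candidate recovery circuit is $(C',0)$ with $C' = (G_1^\dagger\otimes I)\cnot(I\otimes G_2)$: the initial $G_2$ on the fresh copy of $\psi$ recreates $\psi'$, a second $\cnot$ with postselection on $0$ is meant to invert the first, and the final $G_1^\dagger$ strips off the rotation originally applied to $\varphi$. Pulling the single-qubit Cliffords through the projector gives $\Phi_0(C',\tilde\varphi\otimes\psi) = G_1^\dagger\,\Phi_0(\cnot,\tilde\varphi\otimes\psi')\,G_1$, so the lemma reduces to the canonical recovery identity
\begin{align*}
\Phi_0\bigl(\cnot,\, \Phi_1(\cnot,\varphi'\otimes\psi')\otimes\psi'\bigr) \;=\; \varphi'.
\end{align*}
To verify this, I will work directly in the Bloch representation. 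Writing $\varphi' = \frac{1}{2}(I+aX+bY+cZ)$ and $\psi' = \frac{1}{2}(I+xX+yY+zZ)$ --- with $\psi$ pure, so $x^2+y^2+z^2=1$ --- I expand $\cnot(\varphi'\otimes\psi')\cnot$ using the $\cnot$-conjugation rules on two-qubit Paulis and then apply $(I\otimes\bra{m})(\cdot)(I\otimes\ket{m})$, which annihilates every term whose second-qubit Pauli is $X$ or $Y$. This produces a closed-form Bloch vector for $\tilde\varphi$ as a rational function of $(a,b,c,x,y,z)$. Substituting it back into the same formula for $\Phi_0(\cnot,\tilde\varphi\otimes\psi')$, the new coordinates come out with numerators proportional to $a(x^2+y^2)$, $b(x^2+y^2)$, and $c(1-z^2)$ over a common denominator of $1-z^2$; purity of $\psi$ supplies $x^2+y^2 = 1-z^2$, and the three ratios collapse to $(a,b,c)$ as required.

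The main obstacle will be this Bloch-vector bookkeeping. There are many two-qubit Pauli terms to propagate through $\cnot$ (with signs appearing from identities such as $XY = iZ$), and two successive postselections must be composed without misplacing a normalization factor. Each individual step is elementary, but purity of $\psi$ has to be invoked at exactly the right moment for the $(1-z^2)$ factor to divide out cleanly; once that cancellation is observed, the transfer from the canonical case back to an arbitrary Clifford-equivalent $(C,b)$ is immediate from the equivariance already used in the construction of $C'$.
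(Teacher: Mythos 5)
Your proposal is correct and follows essentially the same route as the paper's own proof: reduce to the canonical form via Lemma~\ref{lemma:cir-outputs-all} and Corollary~\ref{cor:cir-negate-eq}, take the recovery circuit to be a second $\cnot$ postselected on $0$ with single-qubit Clifford corrections, and verify the composed identity $\Phi_0(\cnot,\Phi_1(\cnot,\varphi'\otimes\psi')\otimes\psi')=\varphi'$ in Bloch coordinates, where purity of $\ket{\psi}$ supplies the cancellation $x^2+y^2=1-z^2$. The only cosmetic difference is that the paper folds the Clifford witnessing the failure-branch equivalence directly into the recovery circuit as $((G^{\dagger}_1\otimes I)\cnot(G_2\otimes G),0)$, whereas you defer that correction to a final ``prepend the equivalence Clifford'' step.
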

\begin{proof}
Let $(C, b) \sim (\cnot(G_1 \otimes G), 0)$, where
$G_1$ and $G$ are single qubit Clifford gates.
By Corollary~\ref{cor:cir-negate-eq}, we know
\begin{align}
(C, 1-b) \sim ((I \otimes X)\cnot(G_1 \otimes G), 0)
         \equiv (\cnot(G_1 \otimes G), 1)
\end{align}
which means there is a single qubit Clifford
gate $G_2$ such that
\begin{align}
(C, 1-b) \equiv ((G^{\dagger}_2 \otimes I)\cnot(G_1 \otimes G), 1).
\end{align}
We shall show that $((G^{\dagger}_1 \otimes I)
\cnot(G_2 \otimes G), 0)$
is a recovery circuit of $(C, b)$.  Figure~\ref{fig:democir}
includes reference diagrams to aid comprehension.

If the input to circuit $C$ is $\varphi_1 \otimes
\psi$, consider $\varphi'_1 \otimes \psi' =
G_1\varphi_1 G^{\dagger}_1 \otimes G\psi G^{\dagger}$.
Let $(x_1, y_1, z_1)$ be the Bloch vector of $\varphi'_1$
and $(x, y, z)$ be the Bloch vector of $\ket{\psi'}$. For
ease of notation, we define outputs
\begin{align}
\varphi'_2 &= \Phi_1(\cnot, \varphi_1' \otimes \psi') \\
\varphi_2 &= G^{\dagger}_2 \varphi'_2 G_2
           = \Phi_{1-b}(C, \varphi_1 \otimes \psi).
\end{align}
Then the Bloch vector $(x_2, y_2, z_2)$
of $\varphi'_2$ becomes
\begin{align}
  x_2 &= \frac{x_1x + y_1y}
               {1 - z_1z}, &
  y_2 &= \frac{y_1x - x_1y}
               {1 - z_1z}, &
  z_2 &= \frac{z_1 - z}
               {1 - z_1z}.
\end{align}

Now suppose $\varphi_3 = \Phi_0(\cnot(G_2 \otimes G),
\varphi_2 \otimes \psi)$.
For postselected circuits that are basically a single $\cnot$,
the equations for computing the output's Bloch vector
are essentially the same:
\begin{align}
  x_3 &= \frac{x_2x - y_2y}
               {1 + z_2z}, &
  y_3 &= \frac{y_2x + x_2y}
               {1 + z_2z}, &
  z_3 &= \frac{z_2 + z}
               {1 + z_2z},
\end{align}
where $(x_3, y_3, z_3)$ represents the Bloch vector of
$\varphi_3$.  Using $x^2 + y^2 + z^2 = 1$, we can show
\begin{align}
  x_3 &=
  \frac{x_1x^2 + xy_1y - xy_1y + x_1y^2}
       {1 - z_1z + z_1z - z^2} = x_1.
\end{align}
Likewise, $y_3 = y_1$ and $z_3 = z_1$, which means
$\varphi_3 = \varphi_1' = G_1\varphi_1 G^{\dagger}_1$.
The circuit
$((G^{\dagger}_1 \otimes I)\cnot(G_2 \otimes G), 0)$
is therefore a recovery circuit of $(C, b)$.
\end{proof}

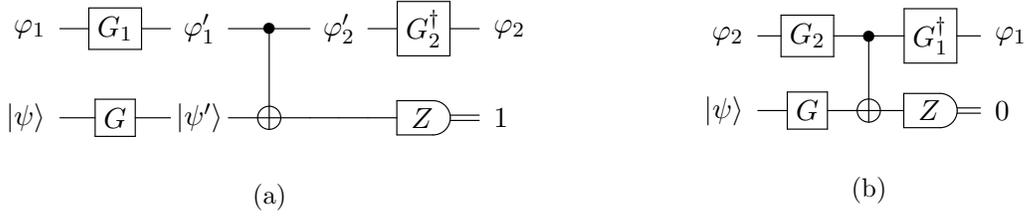
\begin{figure}[t]
  \begin{subfigure}[h]{0.55\textwidth}
    \begin{align*}
      \Qcircuit @C=1.0em @R=1.5em {
        \lstick{\varphi_1} & \gate{G_1} & \qw & \varphi'_1 & &
            \ctrl{1} & \qw & \varphi_2' & &
            \gate{G^{\dagger}_2} & \rstick{\varphi_2}
            \qw \\
        \lstick{\ket{\psi}} & \gate{G} & \qw & \ket{\psi'} & &
            \targ    & \qw & \qw        & \qw &
            \measureD{Z} & \rstick{1} \cw
      }
    \end{align*}
    \caption{}
    \label{fig:democir2}
  \end{subfigure}
  \begin{subfigure}[h]{0.4\textwidth}
    \begin{align*}
      \Qcircuit @C=0.8em @R=1.0em {
        \lstick{\varphi_2} & \gate{G_2} &
            \ctrl{1} &
            \gate{G^{\dagger}_1} & \rstick{\varphi_1} \qw \\
        \lstick{\ket{\psi}} & \gate{G} &
            \targ &
            \measureD{Z} & \rstick{0} \cw
      }
    \end{align*}
    \caption{}
    \label{fig:democir3}
  \end{subfigure}
  \caption{Suppose $(C, 1-b)\equiv ((G^{\dagger}_2 \otimes I)
           \cnot(G_1 \otimes G), 1)$.
           This equivalence allows us to study $(C, 1-b)$
           via its substitute in (a) and come up with
           the recovery circuit in (b). We include
           intermediate states like  $\varphi_1'$ and
           $\varphi_2' = G_2 \varphi_2 G^{\dagger}_2$
           in (a) to signify stages in the circuit.
  }
  \label{fig:democir}
\end{figure}

Between $(C, b)$ and its recovery circuit
$((G^{\dagger}_1 \otimes I)\cnot(G_2 \otimes G), 0)$,
there is a relatively straightforward relationship
between the probability that circuit $C$ would have
been successful and the probability that circuit
$(G^{\dagger}_1 \otimes I)\cnot(G_2 \otimes G)$
will be successful.

\begin{corollary}
\label{cor:rec-prob}
Let $\varphi_1 \otimes \psi$ be a two-qubit state and
let $C' = (G^{\dagger}_1 \otimes I)\cnot(G_2 \otimes G)$
be a two-qubit Clifford unitary such that $(C', 0)$
is a recovery circuit of $(C, b)$.
Then
\begin{align}
Q_0(C', \Phi_{1-b}(C, \varphi_1 \otimes \psi)
        \otimes \psi)
    & = \frac{(1 - z^2)/4}
             {1 - Q_b(C, \varphi_1 \otimes \psi)}
    \label{eq:Q1Q2}
\end{align}
where $z = \bra{\psi}G^{\dagger}ZG\ket{\psi}$.
\end{corollary}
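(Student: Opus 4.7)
The plan is to exploit the Clifford equivalences set up in the proof of Lemma~\ref{lemma:rec-cir} to reduce both probabilities to probabilities of bare $\cnot$-postselected circuits, and then carry out the same Bloch-vector bookkeeping that already appears in that proof.

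First I would compute $Q_b(C,\varphi_1\otimes\psi)$. Because $(C,b)\sim(\cnot(G_1\otimes G),0)$ and Clifford equivalence preserves measurement probabilities (as observed in Definition~\ref{def:equiv-cir}), it suffices to compute $Q_0(\cnot(G_1\otimes G),\varphi_1\otimes\psi)$. Writing $\varphi_1'\otimes\psi' = G_1\varphi_1G_1^\dagger\otimes G\psi G^\dagger$, the postselection projector $I\otimes\ketbra{0}{0}$ pushed through the $\cnot$ becomes $(I\otimes I+Z\otimes Z)/2$, so a one-line trace calculation gives $Q_b(C,\varphi_1\otimes\psi) = (1+z_1z)/2$, where $z_1$ is the $z$-coordinate of $\varphi_1'$ and $z = \bra{\psi}G^\dagger Z G\ket{\psi}$ is the $z$-coordinate of $\psi'$. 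Consequently $1-Q_b(C,\varphi_1\otimes\psi) = (1-z_1z)/2$.

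Next I would compute the left-hand side $Q_0(C',\varphi_2\otimes\psi)$. Since the leading gate $G_1^\dagger\otimes I$ in $C'$ acts only on the first qubit, it commutes with the $Z$-measurement on the second qubit and so does not affect the probability; hence $Q_0(C',\varphi_2\otimes\psi) = Q_0(\cnot(G_2\otimes G),\varphi_2\otimes\psi)$. Repeating the same projector calculation as above with input $\varphi_2'\otimes\psi' = G_2\varphi_2G_2^\dagger\otimes G\psi G^\dagger$ yields $(1+z_2z)/2$, where $z_2$ is the $z$-coordinate of $\varphi_2'$. But the proof of Lemma~\ref{lemma:rec-cir} already records $z_2 = (z_1-z)/(1-z_1z)$, and substituting this gives
\begin{align}
Q_0(C',\varphi_2\otimes\psi) \;=\; \frac{1}{2}\left(1+\frac{(z_1-z)z}{1-z_1z}\right) \;=\; \frac{1-z^2}{2(1-z_1z)}.
\end{align}

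Finally I would divide the two expressions to obtain $(1-z^2)/4$ over $(1-z_1z)/2$, matching the right-hand side of Equation~\ref{eq:Q1Q2}. The main obstacle, if any, is bookkeeping: I need to be careful that the Clifford equivalence $(C,b)\sim(\cnot(G_1\otimes G),0)$ and the strict equivalence $(C,1-b)\equiv((G_2^\dagger\otimes I)\cnot(G_1\otimes G),1)$ from the lemma's proof are used consistently so that the $z_1$ appearing in $Q_b(C,\varphi_1\otimes\psi)$ is the same $z_1$ that feeds into the Bloch-vector formula for $z_2$; once that is checked, everything else is a one-step algebraic simplification already done in the lemma.
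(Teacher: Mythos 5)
Your proposal is correct and follows essentially the same route as the paper: compute $1-Q_b(C,\varphi_1\otimes\psi)=(1-z_1z)/2$, reuse the Bloch-vector relation $z_2=(z_1-z)/(1-z_1z)$ from the proof of Lemma~\ref{lemma:rec-cir}, and evaluate the recovery probability as $(1+z_2z)/2$. The only difference is cosmetic: the paper reduces ``for simplicity'' to $C=\cnot$, $b=0$, $G_1=G_2=G=I$, whereas you carry the Clifford gates through explicitly via the stated equivalences, which is slightly more careful but the same calculation.
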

\begin{proof}
We assume for simplicity that $C = \cnot$
and $b = 0$, which implies $G_1 = G_2 = G = I$.
Let $z_1 = \mathrm{Tr}(Z\varphi_1)$ and
$z = \bra{\psi}Z\ket{\psi}$.  Also let
$\varphi_2 = \Phi_1(C, \varphi_1 \otimes \psi)$.
Then
\begin{align}
  Q_1(C, \varphi_1 \otimes \psi)
  = \frac{1 - z_1z}{2}.
\end{align}
Similarly, $z_2 = \mathrm{Tr}(Z\varphi_2) =
\frac{z_1 - z}{1 - z_1z}$.  The probability of
recovering $\varphi_1$ is now clear:
\begin{align}
  Q_0(C', \varphi_2 \otimes \psi)
  = \frac{1 + z_2z}{2}
  &= \frac{1 - z_1z + z_1z - z^2}
          {4 \left(\frac{1 - z_1z}{2} \right)} \\
  &= \frac{(1 - z^2)/4}
          {1 - Q_0(C, \varphi_1 \otimes \psi)}
\end{align}
since the circuits perform a single measurement.
\end{proof}

Another implication of the proof to Lemma~\ref{lemma:rec-cir}
is that $\Phi_{1-b}(C, \varphi_1 \otimes \psi)$ is always
$\varphi_1$, up to a single qubit Clifford gate, whenever
$\ket{\psi}$ is an eigenstate of $X$, $Y$, or $Z$ (a stabilizer
qubit). Under these circumstances, the behavior of $(C, b)$ on
these types of inputs is actually no different than non-interacting
circuits.  Hence it does not warrant the use of a circuit
$(G^{\dagger}_1 \otimes I)\cnot(G_2 \otimes G)$ to try and
perform a recovery because the qubit is basically $\varphi_1$.
It is also quite evident by now that there is only one type of
recovery circuit, especially given our construction in Lemma
\ref{lemma:rec-cir}.

\begin{lemma}
\label{lemma:rec-cir-interact}
All recovery circuits are interacting postselected circuits.
\end{lemma}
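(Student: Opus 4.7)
The plan is to argue by contradiction using the classification of Lemma~\ref{lemma:cir-outputs-all}. Suppose $(C', b')$ is a recovery circuit of some interacting postselected circuit $(C, b)$ but is itself not interacting. Then Lemma~\ref{lemma:cir-outputs-all} forces $(C', b')$ into one of the two remaining normal forms: either $(C', b') \sim (I \otimes G, 0)$ or $(C', b') \sim ((I \otimes G)\swap, 0)$ for some single qubit Clifford $G$. I would rule out each in turn.

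First I would handle the normal form $(I \otimes G, 0)$. A direct computation shows $\Phi_0(I \otimes G, \sigma_1 \otimes \sigma_2) = \sigma_1$ whenever the postselection succeeds with positive probability, so Clifford equivalence yields $\Phi_{b'}(C', \sigma_1 \otimes \sigma_2) = H \sigma_1 H^\dagger$ for some fixed single qubit Clifford $H$. Substituting the recovery input $\sigma_1 = \Phi_{1-b}(C, \varphi \otimes \psi)$, $\sigma_2 = \psi$, the defining property of a recovery circuit becomes $H\,\Phi_{1-b}(C, \varphi \otimes \psi)\,H^\dagger = \varphi$ for every $\varphi$ and every valid $\psi$. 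Fixing any $\varphi$ and exhibiting two choices of $\psi$ for which $\Phi_{1-b}(C, \varphi \otimes \psi)$ differs then contradicts the bijectivity of $H\cdot H^\dagger$ against a constant $\varphi$.

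For the normal form $((I \otimes G)\swap, 0)$, the analogous computation gives $\Phi_0((I \otimes G)\swap, \sigma_1 \otimes \sigma_2) = \sigma_2$, hence the output of $(C', b')$ on the recovery input is $H'\psi H'^\dagger$ for some fixed Clifford $H'$. This is independent of $\varphi$, so setting it equal to every $\varphi$ is immediately impossible and needs no further argument.

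The only step above that goes beyond unpacking definitions is the separation inequality $\Phi_{1-b}(C, \varphi \otimes \psi) \neq \Phi_{1-b}(C, \varphi \otimes \psi')$ required in the first case, and this is where I expect the main (mild) obstacle. However, the Bloch-vector formulas already computed in the proof of Lemma~\ref{lemma:rec-cir} make it routine: choosing $\varphi$ with nonzero $z_1$ and varying only the $z$-coordinate of $\psi$ changes $z_2 = (z_1 - z)/(1 - z_1 z)$, and the trailing Clifford $G_2^\dagger$ appearing in the factorization $\Phi_{1-b}(C, \varphi \otimes \psi) = G_2^\dagger \varphi_2' G_2$ preserves that distinction, completing the contradiction.
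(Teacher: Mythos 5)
Your proposal is correct and follows essentially the same route as the paper's proof, which also assumes $(C',b')$ falls into one of the two non-interacting normal forms of Lemma~\ref{lemma:cir-outputs-all} and then merely asserts that a counterexample input $\varphi\otimes\psi$ ``can easily be found''; you supply the details the paper omits. One small correction to your final step: the condition for $z_2=(z_1-z)/(1-z_1z)$ to vary with $z$ is $z_1\neq\pm1$ rather than $z_1\neq0$ (taking $z_1=\pm1$ gives the constant $z_2=\pm1$, whereas $z_1=0$ gives $z_2=-z$, which varies), but any $\varphi$ with $|z_1|<1$ completes the contradiction exactly as you intend.
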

\begin{proof}
Let $(C, b)$ be an interacting postselected circuit and suppose
$(C', b')$ is a recovery circuit of $(C, b)$. If $(C', b')$ is
not an interacting postselected circuit, then
$(C', b') \sim (I \otimes G, 0)$ or
$(C', b') \sim ((I \otimes G)\swap, 0)$,
where $G$ is a single qubit Clifford gate. We can easily find a
two-qubit state $\varphi \otimes \psi$ such that $(C', b')$ fails
to recover $\varphi$ on the input
$\Phi_{1-b}(C, \varphi \otimes \psi) \otimes \psi$.
\end{proof}

Lastly, it should not come as a surprise that more than
one recovery circuit of $(C, b)$ exists.
Even so, we can guarantee that not any one recovery
circuit will outperform another.

\begin{lemma}
\label{lemma:rec-cir-equal}
Let $(C, b)$ be an interacting postselected circuit, and let
$C'' = (G^{\dagger}_2 \otimes I)\cnot(G_1 \otimes G)$ be a
two-qubit Clifford unitary such that $(C, 1-b)\equiv (C'', 1-b'')$.
Then $(C', b')$ is a recovery circuit of $(C, b)$ if and
only if $(C', b')\equiv ((G^{\dagger}_1 \otimes I)\cnot
(G_2 \otimes G), b'')$.
\end{lemma}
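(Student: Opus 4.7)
The plan is to treat the two directions of the biconditional separately. For the \emph{if} direction, suppose $(C', b') \equiv ((G_1^{\dagger} \otimes I)\cnot(G_2 \otimes G), b'')$. By definition of strict equivalence, the two circuits produce identical unnormalized post-selected states (hence identical normalized outputs) on every two-qubit input. The construction inside the proof of Lemma~\ref{lemma:rec-cir} already shows that $((G_1^{\dagger} \otimes I)\cnot(G_2 \otimes G), b'')$ recovers $\varphi$ from $\Phi_{1-b}(C, \varphi \otimes \psi) \otimes \psi$ using the very $G_1, G_2, G$ that appear in $C''$, so the recovery property transfers verbatim to $(C', b')$.

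The substance lies in the \emph{only if} direction. First I would invoke Lemma~\ref{lemma:rec-cir-interact} to conclude that any recovery circuit $(C', b')$ is interacting, and then use Corollary~\ref{cor:cir-negate-eq} to normalize its postselection bit to match $b''$, so that $(C', b') \equiv ((H_3 \otimes I)\cnot(H_1 \otimes H_2), b'')$ for some single qubit Cliffords $H_1, H_2, H_3$ (any Pauli factor introduced by flipping the measurement outcome gets absorbed into the $H_i$). The remaining task is to show that $H_3 = G_1^{\dagger}$, $H_1 = G_2$, $H_2 = G$ modulo the Pauli-and-Clifford identities that $\equiv$ already absorbs (for example $\cnot(Z \otimes I) \equiv (Z \otimes I)\cnot$ around the measured wire).

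To pin down the $H_i$, I would re-run the Bloch-vector computation from the proof of Lemma~\ref{lemma:rec-cir} with $H_1, H_2, H_3$ in place of $G_2, G, G_1^{\dagger}$. Writing $\varphi_2 = \Phi_{1-b}(C, \varphi \otimes \psi)$ and $\varphi_3 = \Phi_{b'}(C', \varphi_2 \otimes \psi)$, the requirement $\varphi_3 = \varphi$ must hold as a rational identity in the Bloch coordinates of $\varphi$ and $\psi$ (choosing $\psi$ away from the $\pm 1$-eigenstates of $G^{\dagger}ZG$ to keep denominators nonzero). The two compositions of the one-$\cnot$ Bloch transformations must therefore cancel exactly; inspecting how the $H_2$-conjugated Bloch vector of $\psi$ enters forces $H_2\psi H_2^{\dagger} = G\psi G^{\dagger}$ for all $\psi$, and similarly the first-wire rotations force $H_3 H_1 = G_1^{\dagger} G_2$ together with the matching of the intermediate $G_2^{\dagger}$/$H_1$ cancellation.

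The main obstacle is bookkeeping the residual freedom: several distinct Clifford tuples give strictly equivalent circuits, so the argument must identify this freedom with exactly the strict-equivalence class of $((G_1^{\dagger} \otimes I)\cnot(G_2 \otimes G), b'')$, rather than overconstrain the $H_i$. I plan to quotient out this freedom at the outset by choosing canonical representatives (e.g.\ pushing any surviving Pauli past the $\cnot$ onto the measured wire so it disappears into the postselection), after which the Bloch-vector identification uniquely determines the remaining nontrivial Clifford data and the conclusion $(C', b') \equiv ((G_1^{\dagger} \otimes I)\cnot(G_2 \otimes G), b'')$ follows.
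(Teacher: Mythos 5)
Your \emph{if} direction is fine and matches the paper's: strict equivalence transfers the recovery property verbatim from the circuit constructed in Lemma~\ref{lemma:rec-cir}. Your \emph{only if} direction, however, takes a genuinely different route from the paper, and as written it has a real gap. The paper's argument is: (i) both $(C',b')$ and $((G_1^{\dagger}\otimes I)\cnot(G_2\otimes G),b'')$ are recovery circuits of $(C,b)$, so by Lemma~\ref{lemma:rec-cir-equiv} they determine the same projector $C'^{\dagger}(I\otimes\ketbra{b'}{b'})C'$; (ii) equal projectors give Clifford equivalence by Lemma~\ref{lemma:clifford-maps}; (iii) a nontrivial residual single-qubit Clifford $G$ would rotate the recovered output, contradicting the recovery property, so the equivalence is strict. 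Step (i) is where all the work lives, and the paper discharges it by a finite check: an interacting circuit's projector is one of only $18$ possibilities $\frac{1}{2}(\sigma_{00}+\lambda_{03})$ with $\lambda_{03}\in\mathcal{P}_{\pm}$, and a single explicitly chosen $\varphi_2\otimes\psi$ separates all $18$ output Bloch components, so at most one projector class can recover.

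Your plan replaces (i)--(ii) with a direct Bloch-vector computation, asserting that the identity $\varphi_3=\varphi$ for all $\varphi,\psi$ ``forces'' $H_2\psi H_2^{\dagger}=G\psi G^{\dagger}$ and $H_3H_1=G_1^{\dagger}G_2$. These forcing claims are exactly the content of the lemma and are not derived; moreover, at least the second is not the right invariant even granting the derivation. For example, $H_1=XG_2$, $H_3=G_1^{\dagger}X$, $H_2=G$ satisfies $H_3H_1=G_1^{\dagger}G_2$, yet $(X\otimes I)\cnot(X\otimes I)=(I\otimes X)\cnot$, so this choice flips the postselected bit and implements $\Phi_{1-b''}$ rather than $\Phi_{b''}$ --- it is \emph{not} in the claimed equivalence class and is not a recovery circuit. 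Your proposed canonicalization (``pushing any surviving Pauli onto the measured wire so it disappears into the postselection'') also fails for exactly this $X$: only $Z$ on the measured wire is absorbed by $\bra{0}$. So the disentangling of which Clifford data is genuinely forced versus which is equivalence-class slack --- the ``main obstacle'' you name --- is not bookkeeping but the substance of the proof, and your sketch does not yet supply it. If you want to avoid the paper's appendix machinery, you would need to actually carry out the comparison of the composed rational maps (e.g.\ by varying $\psi$ and matching coefficients), which amounts to reproving Lemma~\ref{lemma:rec-cir-equiv} in analytic form.
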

\begin{proof}
In the reverse direction, equivalence of postselected
circuits means both produce the exact same output
at the same success rate for all two-qubit states $\rho$.
This certainly includes all two-qubit product states
$\varphi_2 \otimes \psi$,
where $\varphi_2$ is the output of $(C, 1-b)$ on
another input $\varphi_1 \otimes \psi$.

In the forward direction, Lemmas~\ref{lemma:clifford-maps}
and \ref{lemma:rec-cir-equiv} in the Appendices do most of
the job: $(C', b') \sim ((G^{\dagger}_1 \otimes I)\cnot
(G_2 \otimes G), b'')$.  We just need to prove equivalence.
We look back at the definition of Clifford equivalent
postselected circuits, where we must have a single qubit
Clifford gate $G$ such that
\begin{align}
(I \otimes \bra{b'})C'\rho C'^{\dagger}(I \otimes \ket{b'}) =
(G \otimes \bra{b''})C''\rho C''^{\dagger}(G^{\dagger} \otimes \ket{b''})
\end{align}
for all two-qubit states $\rho$.  If it is indeed the case
that they are strictly Clifford equivalent i.e. $G \ne I$,
then $(C', b')$ cannot have been a recovery circuit of $(C, b)$
because the output from $(C', b')$ on $\rho$ will be rotated by $G$.
Thus the two must be equivalent (with ``$\equiv$'').
\end{proof}

From Lemmas~\ref{lemma:rec-cir} and \ref{lemma:rec-cir-equal},
we reach our main result, with Corollary~\ref{cor:rec-cir-rec}
as an immediate consequence to our theorem.

\begin{theorem}
\label{thm:rec-cir}
Every interacting postselected circuit $(C, b)$
has a recovery circuit $(C', b')$.
Moreover, all recovery circuits of $(C, b)$ are
equivalent to $(C', b')$.
\end{theorem}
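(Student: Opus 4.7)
The plan is to obtain Theorem~\ref{thm:rec-cir} as an immediate assembly of the two preceding lemmas, rather than as a new argument. The heavy lifting is already done: Lemma~\ref{lemma:rec-cir} hands us a specific recovery circuit, while Lemma~\ref{lemma:rec-cir-equal} characterizes all recovery circuits up to equivalence. So the main task is just to wire these two results together correctly.

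First I would invoke Lemma~\ref{lemma:rec-cir} directly to supply the existence clause: since $(C, b)$ is assumed interacting, that lemma produces a concrete witness $(C', b')$ of the form $((G_1^\dagger \otimes I)\cnot(G_2 \otimes G), 0)$, where $G_1$ and $G$ arise from the decomposition $(C, b) \sim (\cnot(G_1 \otimes G), 0)$, and $G_2$ is the auxiliary single-qubit Clifford gate that appears in the proof of that lemma, satisfying $(C, 1-b) \equiv ((G_2^\dagger \otimes I)\cnot(G_1 \otimes G), 1)$.

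Next I would extract uniqueness from Lemma~\ref{lemma:rec-cir-equal}. The key observation is that the intermediate Clifford $C'' = (G_2^\dagger \otimes I)\cnot(G_1 \otimes G)$ and the bit $b'' = 0$ produced inside the proof of Lemma~\ref{lemma:rec-cir} are exactly the data the hypothesis of Lemma~\ref{lemma:rec-cir-equal} demands, namely a pair satisfying $(C, 1-b) \equiv (C'', 1-b'')$. Given that, for any recovery circuit $(D, d)$ of $(C, b)$, Lemma~\ref{lemma:rec-cir-equal} yields $(D, d) \equiv ((G_1^\dagger \otimes I)\cnot(G_2 \otimes G), b'') = (C', b')$, which is precisely the uniqueness-up-to-equivalence claim.

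The only real obstacle is notational bookkeeping, not mathematics: Lemma~\ref{lemma:rec-cir-equal} is phrased with the roles of $G_1$ and $G_2$ swapped relative to Lemma~\ref{lemma:rec-cir}, so one must line up the correspondence carefully and in particular check that the bit $b''$ extracted from the equivalence $(C, 1-b) \equiv ((G_2^\dagger \otimes I)\cnot(G_1 \otimes G), 1)$ indeed matches the $b' = 0$ that Lemma~\ref{lemma:rec-cir} delivers. Once this matching is explicit, the proof collapses to a single sentence: apply Lemma~\ref{lemma:rec-cir} for existence, then Lemma~\ref{lemma:rec-cir-equal} for uniqueness.
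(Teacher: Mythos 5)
Your proposal is correct and matches the paper's own treatment exactly: the paper obtains Theorem~\ref{thm:rec-cir} directly from Lemma~\ref{lemma:rec-cir} (existence) and Lemma~\ref{lemma:rec-cir-equal} (uniqueness up to equivalence) with no additional argument. Your bookkeeping check that the bit $b''$ extracted from $(C,1-b)\equiv((G_2^\dagger\otimes I)\cnot(G_1\otimes G),1)$ equals the $b'=0$ delivered by Lemma~\ref{lemma:rec-cir} is the right point to verify, and it goes through as you describe.
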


\begin{corollary}
\label{cor:rec-cir-rec}
Every recovery circuit $(C', b')$ has its own recovery
circuit $(C'', b'')$ with $b'' = b'$.
\end{corollary}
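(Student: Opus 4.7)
The plan is to apply Lemma~\ref{lemma:rec-cir} to $(C', b')$ itself and verify that the postselection value of the constructed recovery circuit can be arranged to equal $b'$. First I would invoke Lemma~\ref{lemma:rec-cir-interact}: since $(C', b')$ is a recovery circuit, it is an interacting postselected circuit, so by Lemma~\ref{lemma:rec-cir} a recovery circuit $(C'', b'')$ of $(C', b')$ exists. The substance of the corollary is therefore the value of $b''$.

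To pin down $b''$, I would trace through the construction in the proof of Lemma~\ref{lemma:rec-cir} a second time. Starting from $(C, b) \sim (\cnot(G_1 \otimes G), 0)$, that proof produces $(C', 0) = ((G^{\dagger}_1 \otimes I)\cnot(G_2 \otimes G), 0)$ where $G_2$ is chosen so that $(C, 1-b) \equiv ((G^{\dagger}_2 \otimes I)\cnot(G_1 \otimes G), 1)$. Now rerun the recipe on $(C', 0)$: since $(C', 0) \sim (\cnot(G_2 \otimes G), 0)$, the "first-qubit Clifford" is $G_2$ and the "ancilla Clifford" is $G$. Applying Corollary~\ref{cor:cir-negate-eq} we get $(C', 1) \equiv (\cnot(G_2 \otimes G), 1)$, and the literal expression $(C', 1) = ((G^{\dagger}_1 \otimes I)\cnot(G_2 \otimes G), 1)$ shows that the gate playing the role of $G^{\dagger}_2$ in the template is exactly $G^{\dagger}_1$. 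The construction then returns the recovery circuit $((G^{\dagger}_2 \otimes I)\cnot(G_1 \otimes G), 0)$, whose postselection value is $0 = b'$. So in this canonical presentation, $b'' = b'$ on the nose, and the recovery-of-the-recovery is in fact (up to equivalence) the original $(C, b)$ with its roles of $G_1$ and $G_2$ swapped back.

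For a recovery circuit presented with $b' = 1$ rather than $0$, I would use the remark following Definition~\ref{def:equiv-cir} that one can convert between postselection on $b$ and on $1-b$ by absorbing an $X$ on the measured qubit (since $X\ket{1-b'} = \ket{b'}$). Passing to an equivalent presentation with postselection $0$, running the computation above to produce a recovery circuit with $b'' = 0$, and then reversing the same $X$ absorption produces an equivalent recovery circuit with $b'' = 1 = b'$. Equivalently, Lemma~\ref{lemma:rec-cir-equal} already provides the freedom to choose the decomposition $(C, 1-b) \equiv ((G^{\dagger}_2 \otimes I)\cnot(G_1 \otimes G), 1-b')$ so that the associated recovery circuit $((G^{\dagger}_1 \otimes I)\cnot(G_2 \otimes G), b')$ postselects on $b'$, and iterating that same observation at the second level gives $b'' = b'$.

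The main obstacle is purely bookkeeping: tracking the chain of Clifford equivalences through two applications of Corollary~\ref{cor:cir-negate-eq} while keeping the postselection value from silently flipping. The underlying structural reason the argument works is the symmetry already visible in the proof of Lemma~\ref{lemma:rec-cir}—the recovery construction interchanges $G_1$ and $G_2$ while leaving $G$ and the postselection value unchanged—so iterating it twice swaps the Cliffords back and leaves the postselection invariant, hence $b'' = b'$.
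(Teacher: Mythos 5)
Your proposal is correct and follows essentially the same route as the paper: existence comes from Lemma~\ref{lemma:rec-cir-interact} plus Lemma~\ref{lemma:rec-cir}, and the postselection bit is fixed by absorbing an $X$ on the measured qubit, which is exactly the paper's one-line argument $(C'', 1-b') \equiv ((I \otimes X)C'', b')$. Your additional tracing through the $G_1 \leftrightarrow G_2$ symmetry of the construction is correct but not needed (and note the intermediate relation $(C',1)$ versus $(\cnot(G_2\otimes G),1)$ is a Clifford equivalence $\sim$ rather than $\equiv$ unless $G_1 = I$; you sidestep this by using the literal decomposition, so the argument stands).
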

\begin{proof}
The first claim is a given since recovery circuits
are interacting postselected circuits themselves by
Lemma~\ref{lemma:rec-cir-interact}. If the recovery
circuit $(C'', b'')$ already satisfies $b'' = b'$,
then we are done. Otherwise, $b'' = 1 - b'$,
so $(C'', b'') \equiv ((I \otimes X)C'', b')$.
\end{proof}

\section{Example Routines Featuring Recovery Circuits}

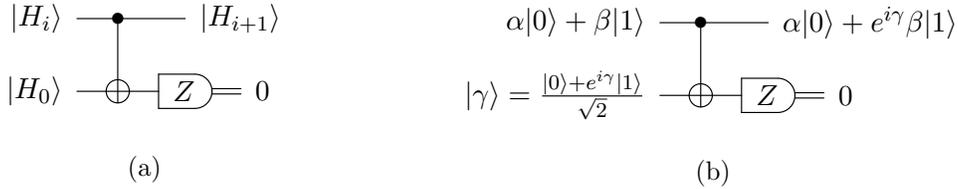
\begin{figure}[t]
  \begin{subfigure}[h]{0.45\textwidth}
    \begin{align*}
      \begin{matrix} \quad
      \Qcircuit @C=1.0em @R=1.8em {
        \lstick{\ket{H_i}} & \ctrl{1} & \rstick{\ket{H_{i+1}}}
            \qw \\
        \lstick{\ket{H_0}} &  \targ   & \measureD{Z} & \rstick{0} \cw
      }
      \end{matrix}
    \end{align*}
    \caption{}
    \label{fig:ex-dcs}
  \end{subfigure}
  \begin{subfigure}[h]{0.45\textwidth}
    \begin{align*} \quad \quad
      \Qcircuit @C=1.0em @R=1.8em {
        \lstick{\alpha\ket{0} + \beta\ket{1}} & \ctrl{1} &
            \rstick{ \alpha\ket{0} +
                     e^{i\gamma}\beta\ket{1}} \qw \\
        \lstick{\ket{\gamma} =
                \frac{\ket{0} + e^{i\gamma}\ket{1}}{\sqrt{2}}} &
            \targ & \measureD{Z} & \rstick{0} \cw
      }
    \end{align*}
    \caption{}
    \label{fig:ex-par}
  \end{subfigure}
  \caption{The postselected circuit $(\cnot, 0)$ appears
      in both \cite{dcs} and \cite{jones.qchem}, with each
      one supplying a different input set to circuit $\cnot$.
      The qubit $\ket{H_0}$ in (a) is the $+1$ eigenstate of
      $H$, and the process of generating $\ket{H_{i+1}}$ in
      \cite{dcs} starts with $\ket{H_0} \otimes \ket{H_0}$.
      For (b), a qubit $\ket{\gamma}$ leads to a $+\gamma$
      $Z$-rotation on the first qubit $\alpha\ket{0}+\beta\ket{1}$
      upon measuring $0$.
  }
  \label{fig:excir}
\end{figure}

Recovery circuits appear in the literature. The procedure
of \cite{dcs} uses an interacting circuit with a single
$\cnot$ to obtain ``ladder'' qubit states of the kind
\begin{align}
\ket{H_i} =
\cos\left(\theta_i\right)\ket{0} + \sin\left(\theta_i\right)\ket{1},
    \ \cot\left(\theta_i\right) = \cot^{i+1}\left(\pi/8\right)
\end{align}
for $i \ge 0$. The production starts by supplying two copies
of the magic state $\ket{H_0} = H\ket{H_0}$ to the $\cnot$
circuit in Figure~\ref{fig:ex-dcs}.  Each time we gain a new
state $\ket{H_i}$, we can reuse the qubit in an attempt to
create the next $\ket{H_{i+1}}$. If the attempt fails, then
the output of $(\cnot, 1)$ on $\ket{H_i} \otimes \ket{H_0}$
is $\ket{H_{i-1}}$. Given that the recovery circuit of
$(\cnot, 0)$ is itself, the method to recover $\ket{H_i}$
from $\ket{H_{i-1}} \otimes \ket{H_0}$ is no different than
the procedure to create it.

The programmable ancilla rotation (PAR) of \cite{jones.qchem}
is similar.  It uses qubits of the kind
\begin{align}
\ket{\gamma} = \frac{\ket{0} + e^{i\gamma}\ket{1}}{\sqrt{2}}
\label{eq:equa-qbit}
\end{align}
to rotate $\ket{q} = \alpha\ket{0} + \beta\ket{1}$
about the $Z$-axis by an angle $\gamma$, demonstrated in
Figure~\ref{fig:ex-par}.  The PAR routine uses
the same interacting circuit $\cnot$ to achieve this task.
On the chance that the $Z$-measurement returns $1$,
then instead of
$\ket{q + \gamma} = \alpha\ket{0} + e^{i\gamma}\beta\ket{1}$,
the output becomes
$\ket{q - \gamma} = \alpha\ket{0} + e^{-i\gamma}\beta\ket{1}$,
which is $\ket{q}$ rotated by $-\gamma$.  Jones et.\ al
\cite{jones.qchem} suggest pairing $\ket{q - \gamma}$
with $\ket{2\gamma}$ as a direct line to $\ket{q + \gamma}$,
but we can alternatively break this down into two smaller
steps if $\ket{\gamma}$ are the only states available.  We
first run the $\cnot$ circuit on
$\ket{q - \gamma} \otimes \ket{\gamma}$.
If we measure $0$, then we recover $\ket{q}$, and
we proceed with rerunning circuit $\cnot$ on
$\ket{q} \otimes \ket{\gamma}$.

\begin{figure}[t]
  \begin{subfigure}[b]{0.55\textwidth}
    \begin{align*}
    \Qcircuit @C=1.2em @R=1.0em {
                                             &                       &
        \multigate{4}{\ \ U\ \ }  & \qw &  \qw           &
        \qw               &  \qw           & \\
                                             &                       &
        \ghost{\ \ U\ \ }         & \qw &  \qw           &
        \qw               &  \qw           & \\
      \lstick{\raisebox{1.2em}{ $\rho$}~~ }  &                       &
        \ghost{\ \ U\ \ }         & \qw &  \qw           &
        \qw               &  \qw           & \\
                                             &  \vdots               &
                                  &     &                &
                          &  \vdots        & \\
                                             &                       &
        \ghost{\ \ U\ \ }         & \qw &  \varphi~~~~   &
        \multigate{1}{C}  &  \qw  & \push{\rule{0em}{0.1em}} \\
                                             &  \lstick{\ket{\psi}}  &
        \qw                       & \qw &  \qw           &
        \ghost{C}         &  \measureD{Z} & \rstick{b} \cw
    \gategroup{1}{1}{5}{1}{0.5em}{\{}
    }
    \end{align*}
    \caption{}
    \label{fig:ex-entangle}
  \end{subfigure}
  \begin{subfigure}[b]{0.40\textwidth}
    \begin{align*}
    \Qcircuit @C=0.8em @R=0.8em {
      \lstick{\varphi} &
         \multigate{1}{C} &
         \rstick{\varphi'} \qw \\
      \lstick{\ket{\psi}} & \ghost{C} &
         \measureD{Z} & \rstick{1-b} \cw
    }
    \end{align*}
    \begin{align*}
    \Qcircuit @C=0.8em @R=0.8em {
      \lstick{\varphi'} &
         \multigate{1}{C'} &
         \rstick{\varphi} \qw \\
      \lstick{\ket{\psi}} & \ghost{C'} &
         \measureD{Z} & \rstick{b'} \cw
    }
    \end{align*}
    \vspace{1em}
    \caption{}
    \label{fig:ex-entangle-rec}
  \end{subfigure}
  \caption{Recovery circuits also work when one of the
    qubits is entangled with another system.
    In (a), we trace out all but the $n$-th qubit of
    $U\rho U^{\dagger}$ to get $\varphi \otimes \psi$ as input
    to circuit $C$.  If we measure $1-b$ as pictured in the
    top circuit of (b), then we execute circuit $C'$ on
    $\varphi' \otimes \psi$ to try and recover $\varphi$.
    We succeed with the recovery if we measure $b'$.
  }
  \label{fig:larger-rec}
\end{figure}
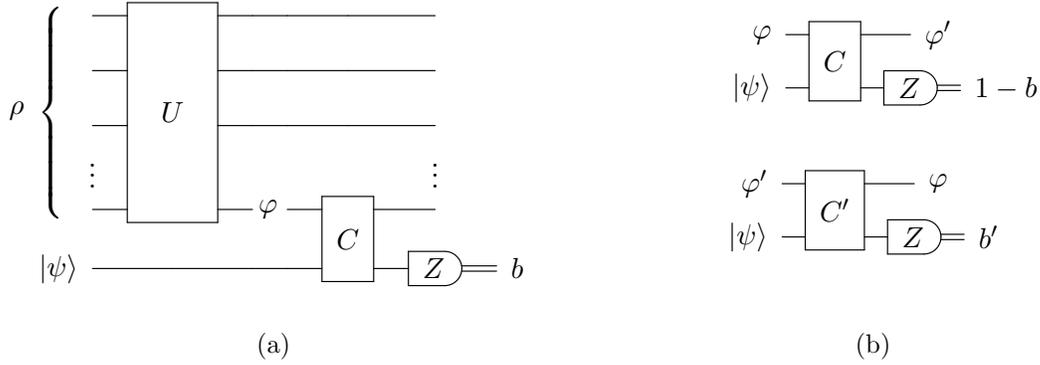

In addition to pure qubits, our notation for the two-qubit
state $\varphi \otimes \psi$ throughout the
previous section indicates that $\varphi$ is allowed to be
mixed, and it can even be part of a larger entangled system.
As a quick demonstration, suppose we have the situation as
illustrated in Figure~\ref{fig:ex-entangle}. Let $(C', b')$
be the recovery circuit of $(C, b)$ and let
\begin{align}
U \rho U^{\dagger}
&= \frac{1}{2^n} \left( \mathrm{\mathbf{P}}_I \otimes I +
                        \mathrm{\mathbf{P}}_X \otimes X +
                        \mathrm{\mathbf{P}}_Y \otimes Y +
                        \mathrm{\mathbf{P}}_Z \otimes Z  \right)
\end{align}
where $\mathrm{\mathbf{P}}_L$ are Pauli operator sums on the
first $n-1$ qubits.  While the proof to Lemma~\ref{lemma:rec-cir}
is generalizable to include the unused portions
$\mathrm{\mathbf{P}}_L$ of the entangled state, the math is
simpler and works out the same if we trace out the first $n-1$
qubits, keeping only the last qubit $\varphi = \mathrm{Tr}_{1,n-1}
\left( U\rho U^{\dagger} \right)$ that we need for the two-qubit
circuit.  If we are unlucky, then qubit $n$ becomes
$\varphi' = \Phi_{1-b}(C, \varphi \otimes \psi)$, but
we can try to regain $\varphi$ by executing circuit $C'$
on $\varphi' \otimes \psi$. If the recovery is successful,
then we have another opportunity at the output
$\Phi_b(C, \varphi \otimes \psi)$.
In all likelihood, this is a less lengthy process than
preparing another $\rho$ and running the circuit of $U$ again;
by some estimates, a synthesis of $U$ over a universal
gate set may require an exponential number of
gates \cite{giles.selinger.multi}.
This is a stark contrast to $C'$, which uses one
$\cnot$ with possibly a couple more single qubit
Clifford gates.

\section{Experimentation with Recovery Circuits}
\label{sec:expmts}
Consider a two-qubit Clifford unitary $C_1$ and a two-qubit
state $\varphi \otimes \psi$. Suppose we have a target outcome
of $b_1$; the intent is to produce output $\Phi_{b_1}(C_1,
\varphi \otimes \psi)$. Then by Corollary~\ref{cor:rec-cir-rec},
we can define a depth $k$ protocol to be a procedure on $k-1$
postselected circuits $(C_1, b_1)$, $\ldots$, $(C_{k-1}, b_{k-1})$
such that $(C_i, b_i)$ is the recovery circuit of $(C_{i-1}, b_{i-1})$.
We start by running circuit $C_1$ on $\varphi \otimes \psi$. If
circuit $C_1$ is successful i.e.\ we measure $b_1$, then no recovery
attempts are necessary and we declare success. Otherwise, we enlist
circuit $C_2$ to try and obtain $\varphi$. More generally, if
circuit $C_i$ is successful, then we recover an input qubit to circuit
$C_{i-1}$; if not, we run circuit $C_{i+1}$ to recover an input qubit
to circuit $C_i$.

The value of $k$ represents a stopping point in our protocol:
when circuit $C_{k-1}$ is unsuccessful, we declare failure,
discard the output, and restart with a new copy $\varphi \otimes
\psi$ to circuit $C_1$. Thus this process on $k-1$ circuits
is nothing more than a classical random walk on $k+1$ integers
$\{0, \ldots, k\}$, where the walk begins at location $1$, a
step onto $0$ signifies success, and a step onto $k$ means
failure. The success probability of circuit $C_i$ is the probability
of a left step from $i$ to $i-1$ and is determined recursively
by Equation~\ref{eq:Q1Q2} in Corollary~\ref{cor:rec-prob}. A
step in either direction consumes one $\ket{\psi}$.

We conduct simulations of this process to obtain a better idea
for $N_k$, the expected number of $\ket{\psi}$ resources needed
to create one $\Phi_{b_1}(C_1, \varphi \otimes \psi)$ with our
depth $k$ protocol. Let $d$ be the cost to prepare a single instance
of $\varphi$ relative to the cost of $\ket{\psi}$. Then the cost
of one execution or trial is the same as $d$ plus the number of
$\ket{\psi}$ qubits used before halting, regardless of declaring
success or fail. The costs from all trials are tallied together
and divided by the number of successful trials to obtain $N_k$.
We compare this against the expected cost without recovery
($k = 2$), which is
\begin{align}
N_2 = \frac{d + 1}{Q_{b_1}(C_1, \varphi \otimes \psi)}.
\end{align}
We assume for the sake of simplicity that $(C_1, b_1) = (\cnot, 0)$,
which means $(C_2, b_2) = (\cnot, 0)$, and so forth for the other
$k-3$ recovery circuits.

We further assume that $Q_0(CNOT, \varphi \otimes \psi) = 1/2$.
Since we fix the first success probability, $N_k$ is dependent
on the parameter $z = \bra{\psi}Z\ket{\psi}$ that appears in the
recovery success rate Equation~\ref{eq:Q1Q2}. Technically, we need
a different $\varphi$ with each choice of $\ket{\psi}$ to maintain
$Q_0(CNOT, \varphi \otimes \psi) = 1/2$ and the same output
$\Phi_0(CNOT, \varphi \otimes \psi)$. Usually different $\varphi$
means different costs $d$, but we will ignore this momentarily and
assume the preparation overhead $d$ for each $\varphi$ is the same
for the purposes of a broader comparison of $N_k$ across different
$\ket{\psi}$ qubits. In the first set of experiments, we include only
one recovery circuit ($k = 3$). The following table summarizes the
expected costs for four samples of $z$ obtained over the course of
$100000$ trials:
\begin{align*}
\begin{array}{|c|c|c|c|c|c|} \hline
  d & N_2 & N_3\colon z=\sqrt{0.96} & N_3\colon z=\sqrt{0.50} &
    N_3\colon z=\sqrt{0.04} & N_3\colon z=0 \\
  \hline
  10^{-1} &   2.2 &       3.20 &      3.18 &       3.15 &      3.15 \\
  \hline
     10^0 &   4.0 &       4.99 &      4.75 &       4.51 &      4.50 \\
  \hline
     10^1 &    22 &       22.7 &      20.5 &       18.2 &      18.0 \\
  \hline
     10^2 &   202 &      200.4 &     177.9 &      155.1 &     157.7 \\
  \hline
     10^3 &  2002 &     1988.9 &    1750.7 &     1521.9 &    1498.7 \\
  \hline
     10^4 & 20002 &    19816.4 &   17488.0 &    15215.4 &   14998.7 \\
  \hline
\end{array}
\end{align*}
The first row with $d = 0.1$ should be interpretted as $\varphi$
being cheaper to prepare than $\ket{\psi}$. We clearly see an
improvement when factoring in recovery in the face of large relative
preparation overhead between $\varphi$ and $\ket{\psi}$.  We also
see a trend of lower costs as $z$ grows smaller, when $\ket{\psi}$
is moving closer to the $XY$-plane in the Bloch sphere. This is due
to the differences in the recovery success rate at circuit $C_2$,
which are $0.02$, $0.25$, $0.48$, and $0.5$, respectively.

\begin{figure}[t]
  \centering
  \includegraphics[width=0.75\textwidth,trim={2em 1em 2em 2.5em},clip]
      {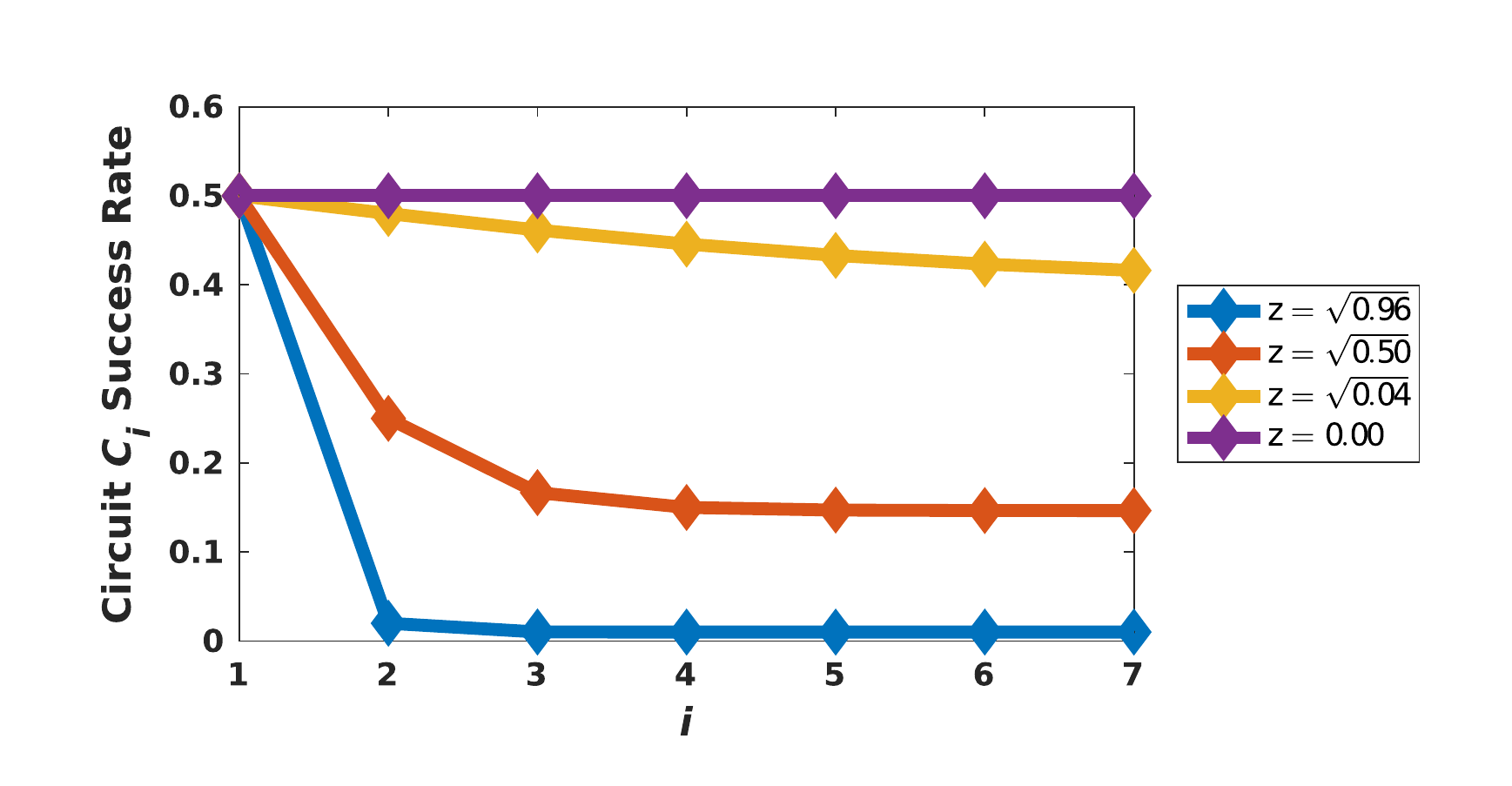}
  \caption{The success probability between circuit $C_i$ and
           circuit $C_{i+1}$ defined recursively in Corollary
           \ref{cor:rec-prob} drops more dramatically as $z$
           moves closer to $1$.  This leads to a greater expected
           cost $N_k$ of our protocol since the recovery is less
           likely to succeed relative to other choices of $z$.
           On the other end of the spectrum, the success probability
           of each circuit $C_i$ is uniform when $z=0$.}
  \label{fig:exp}
\end{figure}

In the second batch of experiments, we maintain $d = 1000$
but vary the number of circuits parameterized by $k$. Again,
$Q_0(CNOT, \varphi \otimes \psi) = 1/2$ and we run $100000$
trials. Data for $N_k$ is compiled together in the table below,
starting with $k = 3$:
\begin{align*}
\begin{array}{|c|c|c|c|c|} \hline
  k & N_k\colon z=\sqrt{0.96} & N_k\colon z=\sqrt{0.50} &
    N_k\colon z=\sqrt{0.04} & N_k\colon z=0 \\
  \hline
      3 &  1981.7 &  1753.2 &  1522.9 &  1501.6 \\
  \hline
      4 &  1982.9 &  1720.5 &  1372.2 &  1336.9 \\
  \hline
      5 &  1982.4 &  1716.5 &  1302.9 &  1255.2 \\
  \hline
      6 &  1987.5 &  1710.9 &  1266.6 &  1206.2 \\
  \hline
      7 &  1982.5 &  1715.3 &  1246.7 &  1174.7 \\
  \hline
     10 &  1991.7 &  1717.0 &  1221.5 &  1120.8 \\
  \hline
     20 &  2002.5 &  1727.3 &  1220.2 &  1072.9 \\
  \hline
     30 &  2006.3 &  1734.6 &  1231.4 &  1064.5 \\
  \hline
     40 &  2023.5 &  1743.7 &  1240.8 &  1066.3 \\
  \hline
\end{array}
\end{align*}
Observe that the value of $N_k$ continue to lower
noticably for some of $\ket{\psi}$ cases as more circuits
are added before increasing again. This behavior is no
surprise since at some point, the penalty to sustain the
recovery process will exceed the overhead of repeating the
computation. If we look at the success probabilities for
the first seven circuits of the protocol for each of the
four $z$ samples in Figure~\ref{fig:exp}, we also see the
success rates decrease to some lower boundary as $i$ increases,
with the exception of when $z = 0$. The drop in probabilities
from circuit $C_1$ to circuit $C_3$ is quite significant
when $z$ is close to $1$ (and $1 - z^2$ is small), so the
chance of recovery at circuit $C_3$ is only slightly larger
than $0$. This explains why there is no apparent change in
$N_k$ between one recovery circuit ($k = 3$) versus two ($k = 4)$
for the case $z = \sqrt{0.96}$. The ideal situation is
to know beforehand how many circuits to include to minimize
resource usage.

\section{Conclusion}
We have shown two-qubit stabilizer circuits require nothing
more than a few Clifford gates to perform a job.  These
simplifications shed light into the complementary nature
between interacting circuits.  Despite measurements generally
being irreversible, we find an exception when handling a
two-qubit product state $\varphi \otimes \psi$.
That is, we can use $\ket{\psi}$ in conjuction with a specific
circuit to salvage the expensive resource qubit $\varphi$.

One natural follow-up is whether something resembling
recovery circuits exist for larger stabilizer circuits
that is distinct from stabilizer error correcting code
procedures.  This question has been answered to an extent
for the Clifford+$T$ gate set in \cite{brs.prob.cir,
brs.syn.rus, paetz.svore.rus}, where we can treat
$\ket{\psi} = HP^{\dagger}\ket{H_0}$ to perform a
non-Clifford $\pi/4$ phase rotation
\begin{align}
T = \begin{bmatrix} 1 &  0 \\ 0 & e^{i\frac{\pi}{4}} \end{bmatrix}.
\end{align}
The goal in \cite{brs.prob.cir, brs.syn.rus, paetz.svore.rus}
uses a multiqubit circuit of Clifford+$T$ gates to approximate
an arbitrary single qubit unitary $U$ up to some error $\epsilon$.
If the measurements are unfavorable, then there is a backup
operation that either returns the qubits to the initial
state, or directly tries to approximate $U$ using a secondary
circuit.  It is worth investigating whether there exists conditions
that enable larger stabilizer circuits to exhibit the recovery
feature we demonstrated here on general $\ket{\psi}$
resources.

Another direction that we may pursue is a more detailed
and thorough examination of the depth $k$ protocol.
In particular, there is an optimal number of circuits
to employ that uses the fewest number of resources in
expectation on each invocation. As we saw in Section
\ref{sec:expmts}, the behavior of our protocol is akin
to that of a (possibly non-uniform) random walk.  This
modeling of probabilistic circuits is nothing new
(see \cite{brs.prob.cir, dcs, jones.qchem}).
One matter we need to keep in mind is the costs
of attaining qubits $\varphi$ and $\ket{\psi}$.
The amount of work that went into preparing
$\varphi$ should exceed that of $\ket{\psi}$ in order
for the recovery to be cost effective, which stems
from the fact that we need a copy of $\ket{\psi}$ to
operate each circuit. The random walk techniques in
\cite{kemeny1960finite} should also prove useful for
gathering a more precise cost estimate. From there, we
could gain better insight into the overall capabilities
and limitations of stabilizer circuits acting on
non-stabilizer input.

\nocite{*}
\bibliographystyle{IEEEannot}
\bibliography{two-cir-rec}

\appendix
\section{Proof of Lemma~\ref{lemma:cir-outputs-all}}
\label{app:proof-eq-cir}

Similar to a single qubit, a two-qubit density matrix $\rho$ can
be expressed as a real combination of two-qubit Pauli operators
$\sigma_{jk} = \sigma_j \otimes \sigma_k$, where $\sigma_0 = I$,
$\sigma_1 = X$, $\sigma_2 = Y$, and $\sigma_3 = Z$
e.g.\ $\sigma_{13} = X \otimes Z$. We omit the tensor product and
use $\sigma_{jk}$ for notational reasons.  We define
$\mathcal{P}_{\pm} = \{ \pm \sigma_{jk} \mid j \ne 0
\text{ and } k \ne 0\}$ to be a set of nontrivial two-qubit Pauli
operators.

To prove Lemma~\ref{lemma:cir-outputs-all}, we start by
rewriting Equation~\ref{eq:cliff-eq-cir} in Definition
\ref{def:equiv-cir} as
\begin{align}
\label{eq:cliff-eq-proj}
C_1  \Pi_1 \rho \Pi_1 C^{\dagger}_1 & =
(G \otimes I)C_2 \Pi_2 \rho \Pi_2 C^{\dagger}_2 (G^{\dagger} \otimes I)
\end{align}
where $\Pi_1 = C^{\dagger}_1(I \otimes \ketbra{b_1}{b_1}) C_1$
and $\Pi_2 = C^{\dagger}_2(I \otimes \ketbra{b_2}{b_2}) C_2$
are projection operators.  Reichardt \cite{reichardt.msd.2006}
showed that Equation~\ref{eq:cliff-eq-proj} holds for some single
qubit Clifford $G$ on all states $\rho$ if $\Pi_1 = \Pi_2$.  In
our two-qubit case, there are only $30$ cases of $\Pi_1 = \Pi_2$.
We make some refinements here to make the ideas in
\cite{reichardt.msd.2006} a little more digestable in our
notation.

\begin{lemma}
\label{lemma:clifford-maps}
Let $(C_1, b_1)$ and $(C_2, b_2)$ be postselected two-to-one
stabilizer circuits.  If
\begin{align}
C^{\dagger}_1(I \otimes \ketbra{b_1}{b_1}) C_1 = \Pi =
C^{\dagger}_2(I \otimes \ketbra{b_2}{b_2}) C_2,
\end{align}
then $(C_1, b_1) \sim (C_2, b_2)$.
\end{lemma}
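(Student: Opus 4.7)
The first move is to reduce to the case $b_1 = b_2$. If $b_1 \neq b_2$, we replace $(C_1, b_1)$ with the equivalent circuit $((I \otimes X) C_1, 1 - b_1)$ per the remark following Definition \ref{def:equiv-cir}; since $X\ketbra{b_1}{b_1}X = \ketbra{b_2}{b_2}$, this substitution preserves $\Pi$ while aligning the postselected values. So without loss of generality, set $b_1 = b_2 = b$, and define the two-qubit Clifford $V = C_1 C_2^{\dagger}$. The hypothesis $\Pi_1 = \Pi_2$ then rewrites as $V(I \otimes \ketbra{b}{b})V^{\dagger} = I \otimes \ketbra{b}{b}$, which says $V$ preserves the two-dimensional subspace of vectors of the form $\ket{\phi} \otimes \ket{b}$.

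Because $V$ preserves this subspace, for every single-qubit $\ket{\phi}$ the image $V(\ket{\phi} \otimes \ket{b})$ must equal $(G\ket{\phi}) \otimes \ket{b}$ for a single-qubit unitary $G$ uniquely determined by linearity and unitarity of $V$. A short computation then shows that for any operator $\sigma$ supported on this subspace, $V \sigma V^{\dagger} = (G \otimes I)\sigma(G^{\dagger} \otimes I)$. Taking $\sigma = C_2 \Pi \rho \Pi C_2^{\dagger} = (I \otimes \ketbra{b}{b}) C_2 \rho C_2^{\dagger} (I \otimes \ketbra{b}{b})$ and noting that $C_1 \Pi \rho \Pi C_1^{\dagger} = V \sigma V^{\dagger}$ yields Equation~\ref{eq:cliff-eq-proj}, which is equivalent to Equation~\ref{eq:cliff-eq-cir}.

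The part I expect to be the main obstacle is arguing that the extracted unitary $G$ lies in $\mathcal{C}(1)$, rather than being an arbitrary single-qubit unitary. My plan is to exploit that $V$, being a Clifford, normalizes the Pauli group up to phase: for each single-qubit Pauli $P$, write $V(P \otimes I)V^{\dagger} = \lambda\, A \otimes B$ with Paulis $A, B$ and sign $\lambda \in \{\pm 1\}$. Evaluating both sides on $\ket{\phi} \otimes \ket{b}$ and comparing with $V(P\ket{\phi} \otimes \ket{b}) = (GP\ket{\phi}) \otimes \ket{b}$ forces $B\ket{b} = \pm\ket{b}$ (so $B \in \{I, Z\}$) together with $GPG^{\dagger} = \pm A$. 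Running $P$ through $\{X, Y, Z\}$ then shows $G$ conjugates every single-qubit Pauli to a single-qubit Pauli, which is precisely the defining property of $\mathcal{C}(1)$. Hence $G \in \mathcal{C}(1)$ and $(C_1, b_1) \sim (C_2, b_2)$, finishing the proof.
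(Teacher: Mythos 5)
Your proof is correct, but it takes a genuinely different route from the paper's. You work in the Schr\"odinger picture: setting $V = C_1C_2^\dagger$, the hypothesis becomes $V(I\otimes\ketbra{b}{b})V^\dagger = I\otimes\ketbra{b}{b}$, you read off $G$ as the restriction of $V$ to the invariant two-dimensional subspace, and you certify $G\in\mathcal{C}(1)$ by pushing single-qubit Paulis through $V$ and using the normalizer property — all of these steps check out (including the preliminary reduction to $b_1=b_2$ via $I\otimes X$, and the identity $V\sigma V^\dagger = (G\otimes I)\sigma(G^\dagger\otimes I)$ for $\sigma$ supported on the subspace, which gives Equation~\ref{eq:cliff-eq-proj} directly). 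The paper instead works in the Heisenberg picture: it writes $2\Pi = \sigma_{00}+\lambda_{03}$, picks Paulis $\lambda_{10},\lambda_{20},\lambda_{30}$ commuting with $\lambda_{03}$, expands $\Pi\rho\Pi$ in that basis, tracks where each $C_j$ sends these operators, and exhibits single-qubit Cliffords $G_j$ on the first register that bring both post-measurement states to the common form $\frac{1}{4}(wI+xX+yY+zZ)\otimes\ketbra{b_j}{b_j}$. The trade-off: the paper's computation is constructive (it tells you how to extract $G_1,G_2$ from the Pauli images, machinery that is then reused to enumerate the $30$ reductions in the proof of Lemma~\ref{lemma:cir-outputs-all}), whereas your argument is shorter and more conceptual, isolates the one genuinely nontrivial point (that the induced $G$ is Clifford rather than merely unitary) and dispatches it cleanly, and generalizes verbatim to $n$-to-one reductions where the invariant subspace has dimension $2$ after projecting qubits $2$ through $n$. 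The only cosmetic caveat is that $G$ is determined as a Clifford element only up to global phase, which is immaterial for the conjugation in Definition~\ref{def:equiv-cir}.
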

\begin{proof}
Note that $2(I \otimes \ket{b_j}\bra{b_j}) =
\sigma_{00} + (-1)^{b_j} \sigma_{03}$.
Let $2\Pi = \sigma_{00} + \lambda_{03}$, where
$\lambda_{03} \in \mathcal{P}_{\pm}$, and let
$\lambda_{10}$, $\lambda_{20}$, $\lambda_{30} \in \mathcal{P}_{\pm}$
be two-qubit Pauli operators such that
$[\lambda_{03}, \lambda_{10}] =
[\lambda_{03}, \lambda_{20}] =
[\lambda_{03}, \lambda_{30}] = 0$ and
$i\lambda_{30} = \lambda_{10}\lambda_{20}$.
Let $\rho$ be a two-qubit state.  Then
\begin{align}
\Pi \rho \Pi = \frac{1}{8} \left(
   w \sigma_{00} + w \lambda_{03} +
x \lambda_{10} + x \lambda_{13} +
y \lambda_{20} + y \lambda_{23} +
z \lambda_{30} + z \lambda_{33} \right)
\end{align}
where $\lambda_{k3} = \lambda_{03}\lambda_{k0}$ and
$x = \mathrm{Tr}((\lambda_{10} + \lambda_{13})\rho)$. The
coefficients $w$, $y$, $z$ are determined similarly with
$\sigma_{00} + \lambda_{03}$, $\lambda_{20}+\lambda_{23}$,
and $\lambda_{30} + \lambda_{33}$, respectively.
Our starting condition $C_j \lambda_{03} C^{\dagger}_j =
(-1)^{b_j}\sigma_{03}$ implies
\begin{align}
C_j\lambda_{10} C^{\dagger}_j,
C_j\lambda_{20} C^{\dagger}_j \in
  \{\,   \sigma_{10}&, (-1)^{b_j}\sigma_{13},
        -\sigma_{10}, (-1)^{b_j+1}\sigma_{13}, \nonumber \\
         \sigma_{20}&, (-1)^{b_j}\sigma_{23},
        -\sigma_{20}, (-1)^{b_j+1}\sigma_{23}, \nonumber \\
         \sigma_{30}&, (-1)^{b_j}\sigma_{33},
        -\sigma_{30}, (-1)^{b_j+1}\sigma_{33} \,\}.
\end{align}
This means there are single qubit Clifford gates $G_j$
to permute the operators in a way that
\begin{align}
(G_j\otimes I)C_j \lambda_{10} C^{\dagger}_j
             (G^{\dagger}_j\otimes I) & \in
      \{\,  \sigma_{10}, (-1)^{b_j}\sigma_{13} \,\} \\
(G_j\otimes I)C_j \lambda_{20} C^{\dagger}_j
             (G^{\dagger}_j\otimes I) & \in
      \{\,  \sigma_{20}, (-1)^{b_j}\sigma_{23} \,\}.
\end{align}
The value of $(G_j\otimes I)C_j \lambda_{30}
C^{\dagger}_j (G^{\dagger}_j\otimes I)$ is fixed given
the other two.  Our unnormalized post-measurement states
$\rho'_j = (G_j \otimes I)C_j\Pi \rho \Pi
C^{\dagger}_j(G^{\dagger}_j \otimes I)$ are now
\begin{align}
\rho'_j = \frac{1}{4}
            \left( w I + x X + y Y + z Z \right) \otimes
            \ket{b_j}\bra{b_j}.
\end{align}
The first qubit of $\rho'_1$ and $\rho'_2$ are
the same after $G_1$ and $G_2$.  Therefore
$(C_1, b_1) \sim (C_2, b_2)$.
\end{proof}

We now have the tools available to prove Lemma
\ref{lemma:cir-outputs-all}.  Note that a Clifford
equivalence $(C_1, b_1) \sim (C_2, b_2)$ is invariant
with respect to Clifford circuits that execute prior
to circuits $C_1$ and $C_2$ i.e.\ $(C_1, b_1) \sim
(C_2, b_2)$ if and only if $(C_1U, b_1) \sim (C_2U, b_2)$
for any Clifford unitary $U$.

\begin{proof}
We partition the $15$ Pauli operators $\sigma_{jk}$ into the following sets:
\begin{align}
\mathcal{P}_A &= \{\sigma_{jk}\mid j,k \in \{1,2,3\} \}, &
\mathcal{P}_B &= \{\sigma_{01}, \sigma_{02}, \sigma_{03}\}, &
\mathcal{P}_C &= \{\sigma_{10}, \sigma_{20}, \sigma_{30}\}.
\end{align}

We look at $\sigma_{33}$ first.  Suppose there is a bit $b'$
such that $C\sigma_{33}C^{\dagger} = (-1)^{b'}\sigma_{03}$.
For readability, set $C' = \cnot$.
Knowing $C'\sigma_{33}C'^{\dagger} = \sigma_{03}$, we obtain
$(C, b) \sim (\cnot, b + b'\ \mathrm{mod}\ 2)$ from
Lemma~\ref{lemma:clifford-maps}.  For the remaining
$\sigma_{jk} \in \mathcal{P}_A$, suppose
$C\sigma_{jk}C^{\dagger} = \pm \sigma_{03}$.
Choose single qubit Clifford
gates $G_1$ and $G_2$ such that $(G_1 \otimes G_2)\sigma_{jk}
(G^{\dagger}_1 \otimes G^{\dagger}_2) = \sigma_{33}$.
Define $C'' = C(G^{\dagger}_1 \otimes G^{\dagger}_2)$.
Then $C''\sigma_{33}C''^{\dagger} = (-1)^{b'} \sigma_{03}$
for some $b'$.  The rest follows from previous arguments
to conclude $(C''(G_1 \otimes G_2), b) =
(C, b) \sim (\cnot(G_1 \otimes G_2), b + b'\ \mathrm{mod}\ 2)$.

For the operator $\sigma_{03} \in \mathcal{P}_B$, assume
$C\sigma_{03}C^{\dagger} = (-1)^{b'}\sigma_{03}$. Then
$(C, b) \sim (\sigma_{00}, b + b'\ \mathrm{mod}\ 2)$.
Coverage of the other five from $\mathcal{P}_B$ and
$\mathcal{P}_C$ is similar to the above.

To finish, suppose $(C, b) \sim (I \otimes G, b+b'\ \mathrm{mod}\ 2)$,
where $G$ is a single qubit Clifford gate.
If $b + b'\ \mathrm{mod}\ 2 = 1$, then
$(C, b) \sim (I \otimes G, 1)\equiv (I \otimes XG, 0)$.
The same applies when $(C, b) \sim ((I\otimes G)\swap, 1)$.
If $(C, b) \sim (\cnot(G_1 \otimes G_2), 1)$, then we include
$(I \otimes X)\cnot(G_1 \otimes G_2) = \cnot(G_1 \otimes XG_2)$.
The other case $b+b'\ \mathrm{mod}\ 2 = 0$ follows
directly from Lemma~\ref{lemma:clifford-maps},
\end{proof}

\section{Additional Material on Recovery Circuits}
\label{app:extra}
We may use the following to help us determine
when two recovery circuits are Clifford equivalent.
In particular, it dispels concerns that there may
be two recovery circuits where one has a better
chance of succeeding than the other.  We use the
same notation for two-qubit Paulis $\sigma_{jk}$ and
$\mathcal{P}_{\pm}$ as in Appendix~\ref{app:proof-eq-cir}.

\begin{lemma}
\label{lemma:rec-cir-equiv}
Let $(C_1, b_1)$ be a recovery circuit of an interacting
postselected circuit $(C, b)$.  If $(C_2, b_2)$ is also a
recovery circuit of $(C, b)$, then
$C^{\dagger}_1 (I \otimes \ketbra{b_1}{b_1})C_1 =
C^{\dagger}_2 (I \otimes \ketbra{b_2}{b_2})C_2$.
\end{lemma}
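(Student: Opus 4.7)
By Lemma~\ref{lemma:rec-cir-interact}, both $(C_1,b_1)$ and $(C_2,b_2)$ are interacting postselected circuits, so each projection $\Pi_i = C_i^{\dagger}(I\otimes\ketbra{b_i}{b_i})C_i$ takes the form $\frac{1}{2}(I + \lambda_i)$ for some signed two-qubit Pauli $\lambda_i\in\mathcal{P}_\pm$ whose single-qubit tensor factors are both nontrivial. Showing $\Pi_1=\Pi_2$ therefore reduces to showing $\lambda_1=\lambda_2$, one equality among $18$ candidates.

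The plan is to compare the two circuits through the unnormalized output maps $\Lambda_i(\rho) = V_i\rho V_i^{\dagger}$ with $V_i = (I\otimes\bra{b_i})C_i$, which satisfy $V_i^{\dagger}V_i = \Pi_i$ and $V_iV_i^{\dagger} = I$. The recovery definition forces $\Lambda_1(\varphi'\otimes\psi) = c_1\varphi$ and $\Lambda_2(\varphi'\otimes\psi) = c_2\varphi$ for positive scalars $c_1,c_2$, where $\varphi' = \Phi_{1-b}(C,\varphi\otimes\psi)$, so in particular the two outputs are proportional. The Bloch-coordinate formulas developed in the proof of Lemma~\ref{lemma:rec-cir} show that for each pure $\psi$ with $|\bra{\psi}Z\ket{\psi}|<1$ the map $\varphi\mapsto\varphi'$ is a bijection of the single-qubit Bloch ball, so this proportionality in fact holds across all product states $\rho\otimes\psi$, and by linearity in each tensor factor, across every two-qubit density matrix $\rho$.

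Once $\Lambda_1(\rho) = \mu(\rho)\Lambda_2(\rho)$ holds for all $\rho$, a short bilinearity argument — expand the identity at $\rho_1+\rho_2$ and compare with the two separate identities, using that $\Lambda_2(\rho_1)$ and $\Lambda_2(\rho_2)$ are generically linearly independent — forces $\mu$ to be a constant $\mu>0$. The resulting operator identity $\Pi_1 = \mu\Pi_2$ between two rank-two projectors of trace $2$ forces $\mu=1$ on taking traces, so $\Pi_1 = \Pi_2$ as required.

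The main obstacle will be the first step: verifying the bijectivity of $\varphi\mapsto\varphi'$ uniformly, and handling the boundary cases $|\bra{\psi}Z\ket{\psi}|=1$ at which the Möbius-type Bloch map degenerates. If that abstract route becomes unwieldy, a concrete alternative is available. Using Lemma~\ref{lemma:cir-outputs-all}, reduce to the canonical setting $(C,b)=(\cnot,0)$, in which the construction from Lemma~\ref{lemma:rec-cir} exhibits $(\cnot,0)$ itself as a recovery circuit with $\lambda_1=Z\otimes Z$. Each of the remaining $17$ candidate values of $\lambda_2$ can then be eliminated by evaluating the recovery equation at a small set of carefully chosen product state inputs (e.g.\ $\ketbra{0}{0}\otimes\ketbra{+}{+}$ together with a single additional state that fixes the sign and the first tensor factor), leaving only $\lambda_2 = Z\otimes Z$.
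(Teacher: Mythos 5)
Your primary route has a genuine gap at the step you call ``a short bilinearity argument.''  What the recovery property actually gives you is the identity
$\mathrm{Tr}(\Pi_2\sigma)\,V_1\sigma V_1^{\dagger}=\mathrm{Tr}(\Pi_1\sigma)\,V_2\sigma V_2^{\dagger}$,
a \emph{quadratic} identity in $\sigma$ known only on the set of product states $\rho_1\otimes\ketbra{\psi}{\psi}$ with $\ket{\psi}$ \emph{pure}.  That set is a $5$-dimensional subvariety of the $15$-dimensional space of two-qubit states, not an open set, so polarization does not apply: you cannot evaluate the identity at $\sigma_1+\sigma_2$, since a sum of product states is not a product state, and varying $\ketbra{\psi}{\psi}$ only over rank-one projections means a quadratic polynomial vanishing there may still be a nonzero multiple of $x^2+y^2+z^2-1$.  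Linearity in each tensor factor rescues the \emph{linear} statement $\mathrm{Tr}(\Pi_1\sigma)=\mu\,\mathrm{Tr}(\Pi_2\sigma)$ only after you already know the proportionality factor $\mu(\sigma)=\mathrm{Tr}(\Pi_1\sigma)/\mathrm{Tr}(\Pi_2\sigma)$ is constant, which is precisely what is unproven.  So the chain ``proportional on products $\Rightarrow$ proportional everywhere $\Rightarrow \Pi_1=\mu\Pi_2$'' does not close as written.

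Your concrete fallback is in spirit what the paper does, but the test states you suggest are the worst possible choice.  As noted after Corollary~\ref{cor:rec-prob}, when $\ket{\psi}$ is a stabilizer state every interacting circuit returns $\varphi_1$ up to a single-qubit Clifford, so an input like $\ketbra{0}{0}\otimes\ketbra{+}{+}$ is exactly the degenerate case on which many of the $18$ candidates coincide; moreover the reduction to $(C,b)=(\cnot,0)$ needs care because the recovery circuit receives the \emph{unrotated} $\ket{\psi}$, so the Clifford acting on the second register cannot simply be absorbed.  The paper's proof sidesteps both issues: it proves the contrapositive by fixing one \emph{generic} input, with Bloch vectors $(\sqrt{2/17},\sqrt{5/17},\sqrt{10/17})$ and $(\sqrt{1/11},\sqrt{3/11},\sqrt{7/11})$, and tabulating for each of the $18$ candidate projectors a single output Bloch component $v=(a_{30}+a_{33})/(1+a_{03})$; all $18$ values are pairwise distinct even up to sign, so at most one projector can return $\varphi_1$.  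If you keep your route, replace the stabilizer test states with one generic product state of this kind.
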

\begin{proof}
It is easier to prove the contrapositive. Specifically, we show
the recovery from $(C_2, b_2)$ will fail on one particular pair
of qubits $\varphi_2$ and $\ket{\psi}$, although many exists
that are equally as good.  Suppose
$\Pi_2 = C^{\dagger}_2 (I \otimes \ketbra{b_2}{b_2})C_2$.
Let $2\Pi_2 = \sigma_{00} + \lambda_{03}$, where
$\lambda_{03} \in \mathcal{P}_{\pm}$, and let
$\lambda_{30}$ and $\lambda_{33}$ be two-qubit Pauli
operators from $\mathcal{P}_{\pm}$ such that
$[\lambda_{03}, \lambda_{30}] = 0$ and
$\lambda_{03} = \lambda_{30}\lambda_{33}$.
The qubits $\varphi_2$ and $\ket{\psi}$ we choose shall
have Bloch vectors
\begin{align}
  \varphi_2\colon
  \left( x_2, y_2, z_2 \right) &=
  \left( \sqrt{\frac{2}{17}},
         \sqrt{\frac{5}{17}},
         \sqrt{\frac{10}{17}} \right), &
  \ket{\psi}\colon
  \left( x, y, z \right) &=
  \left( \sqrt{\frac{1}{11}},
         \sqrt{\frac{3}{11}},
         \sqrt{\frac{7}{11}} \right).
\end{align}
Let $\varphi_1$ be a qubit so that
$\varphi_2 = \Phi_{1-b}(C, \varphi_1 \otimes\psi)$.
Let $\varphi'_1 = \Phi_{b_2}(C_2, \varphi_2 \otimes\psi)$.

To prove the recovery by $(C_2, b_2)$ will fail, we merely need to
verify that the Bloch vectors from all $18$ choices of $\lambda_{03}$
are different, which implies $\varphi'_1 \ne \varphi_1$ whenever
$C^{\dagger}_1 (I \otimes \ketbra{b_1}{b_1})C_1 \ne \Pi_2$. We
track the coefficients $a_{jk} = \mathrm{Tr}(\lambda_{jk}
(\varphi_2 \otimes \psi))$. Then
\begin{align}
\mathrm{Tr}\left( \Pi_2 \left( \varphi_2 \otimes
                              \psi \right)
                  \Pi_2 \right)
  &= \frac{1 + a_{03}}{2}, &
\mathrm{Tr}\left( \lambda_{30} \Pi_2
                  \left( \varphi_2 \otimes
                         \psi \right)
                  \Pi_2 \right)
  &= \frac{a_{30} + a_{33}}{2},
\end{align}
yielding $v = \frac{a_{30} + a_{33}}{1 + a_{03}}$ as a Bloch vector
component of $\rho'_1$.  The most convenient choices for $\lambda_{30}$
and $\lambda_{33}$ are tensor products with one identity e.g.\
$\lambda_{03} = -\sigma_{33}$, $\lambda_{30} = \sigma_{30}$,
$\lambda_{33} = -\sigma_{03}$, and $\lambda_{03} = \sigma_{11}$,
$\lambda_{30} = \sigma_{10}$, $\lambda_{33} = \sigma_{01}$, which
means that $a_{03} = a_{30}a_{33}$. If we look at the coefficients
from the first example with $\lambda_{03} = -\sigma_{33}$, then
$a_{30} = z_2$ and $a_{33} = -z$. We get the following components
for each possibility of $\lambda_{03}$:
\begin{align*}
\begin{array}{cc}
  \begin{array}{|c|c|c|c|c|c|c|} \hline
  \lambda_{03} & a_{03} & \lambda_{30} & a_{30} &
    \lambda_{33} & a_{33} & v \\ \hline
  \sigma_{11} & x_2x & \sigma_{10} & x_2 & \sigma_{01} & x & 0.5841 \\ \hline
  \sigma_{12} & x_2y & \sigma_{10} & x_2 & \sigma_{02} & y & 0.7338 \\ \hline
  \sigma_{13} & x_2z & \sigma_{10} & x_2 & \sigma_{03} & z & 0.8957 \\ \hline
  \sigma_{21} & y_2y & \sigma_{20} & y_2 & \sigma_{01} & x & 0.7252 \\ \hline
  \sigma_{22} & y_2y & \sigma_{20} & y_2 & \sigma_{02} & y & 0.8296 \\ \hline
  \sigma_{23} & y_2z & \sigma_{20} & y_2 & \sigma_{03} & z & 0.9354 \\ \hline
  \sigma_{31} & z_2x & \sigma_{30} & z_2 & \sigma_{01} & x & 0.8678 \\ \hline
  \sigma_{32} & z_2y & \sigma_{30} & z_2 & \sigma_{02} & y & 0.9205 \\ \hline
  \sigma_{33} & z_2z & \sigma_{30} & z_2 & \sigma_{03} & z & 0.9708 \\ \hline
  \end{array} &
  \begin{array}{|c|c|c|c|c|c|c|} \hline
  \lambda_{03} & a_{03} & \lambda_{30} & a_{30} &
    \lambda_{33} & a_{33} & v \\ \hline
  -\sigma_{11} & -x_2x & \sigma_{10} & x_2 & -\sigma_{01} & -x &  0.0463 \\ \hline
  -\sigma_{12} & -x_2y & \sigma_{10} & x_2 & -\sigma_{02} & -y & -0.2183 \\ \hline
  -\sigma_{13} & -x_2z & \sigma_{10} & x_2 & -\sigma_{03} & -z & -0.6260 \\ \hline
  -\sigma_{21} & -y_2y & \sigma_{20} & y_2 & -\sigma_{01} & -x &  0.2879 \\ \hline
  -\sigma_{22} & -y_2y & \sigma_{20} & y_2 & -\sigma_{02} & -y &  0.0280 \\ \hline
  -\sigma_{23} & -y_2z & \sigma_{20} & y_2 & -\sigma_{03} & -z & -0.4501 \\ \hline
  -\sigma_{31} & -z_2x & \sigma_{30} & z_2 & -\sigma_{01} & -x &  0.6055 \\ \hline
  -\sigma_{32} & -z_2y & \sigma_{30} & z_2 & -\sigma_{02} & -y &  0.4083 \\ \hline
  -\sigma_{33} & -z_2z & \sigma_{30} & z_2 & -\sigma_{03} & -z & -0.0792 \\ \hline
  \end{array}
\end{array}
\end{align*}
Neither are any of the values $v$ the same if we multiple each
one by $-1$, which may come about from an application of a
single qubit Pauli on the output.  Thus our statement holds.
\end{proof}

\end{document}